\nc{\pl}{{\scalebox{0.7}{+}}}
\nc{\PSD}{\HERM_{\pl}}
\nc{\PD}{\HERM_{\pl\pl}}
\nc{\polarPSD}[1]{{#1}_{\pl}^{\circ}}
\nc{\polarPSDre}[1]{{#1}_{\pl}^{\star}}
\nc{\polarPD}[1]{{#1}_{\pl\pl}^{\circ}}
\nc{\HERM}{\mathscr{H}}
\nc{\cvxset}{\mathscr{C}}
\nc{\density}{\mathscr{D}}
\nc{\subdensity}{\mathscr{D}_\bullet}
\nc{\bcC}{\cC}
\nc{\Meas}{{\scriptscriptstyle \rm M}}
\nc{\Proj}{{{\scriptscriptstyle \rm P}}}
\nc{\RM}{{{\mathscr{R}}}}
\nc{\sK}{{{\mathscr{K}}}}
\nc{\sS}{{{\mathscr{S}}}}
\nc{\sT}{{{\mathscr{T}}}}
\nc{\sA}{{{\mathscr{A}}}}
\nc{\sB}{{{\mathscr{B}}}}
\nc{\sC}{{{\mathscr{C}}}}
\nc{\sE}{{{\mathscr{E}}}}
\nc{\sL}{{{\mathscr{L}}}}
\nc{\sG}{{{\mathscr{G}}}}
\nc{\sF}{{{\mathscr{F}}}}
\nc{\sI}{{{\mathscr{I}}}}
\nc{\sN}{{{\mathscr{N}}}}
\nc{\sM}{{{\mathscr{M}}}}
\nc{\END}{\operatorname{End}}
\nc{\PERM}{\mathfrak{\sigma}}
\nc{\Cone}{\text{\rm Cone}}
\nc{\sep}{{\SEP}}
\nc{\BS}{{\scriptscriptstyle \rm {BS}}}
\nc{\Sand}{{\scriptscriptstyle  \rm S}}
\nc{\Petz}{{\scriptscriptstyle  \rm P}}
\nc{\Hypo}{{\scriptscriptstyle  \rm H}}
\nc{\DD}{{{\mathbb D}}}
\nc{\suchthat}{\text{\rm s.t.}}
\nc{\PPT}{\text{\rm PPT}}
\nc{\Rains}{\text{\rm Rains}}
\nc{\WD}{\text{\rm WD}}
\nc{\new}{\text{\rm new}}
\nc{\sfT}{\mathsf T}
\nc{\SEP}{\text{\rm SEP}}
\nc{\PSEP}{\text{\rm PSEP}}
\nc{\CPTP}{\text{\rm CPTP}}
\nc{\POVM}{\text{\rm POVM}}
\nc{\PVM}{\text{\rm PVM}}
\nc{\CP}{\text{\rm CP}}
\nc{\adv}{\text{\rm adv}}
\nc{\spec}{\text{\rm spec}}
\nc{\poly}{\text{\rm poly}}
\nc{\End}{\operatorname{End}}
\nc{\Par}{\operatorname{Par}}
\nc{\RNG}{\operatorname{RNG}}
\nc{\epi}{\boldsymbol{\operatorname{epi}}}
\nc{\op}{\boldsymbol{\operatorname{op}}}
\nc{\db}[1]{\left\llbracket#1 \right\rrbracket}
\nc{\img}{\mathbf{i}}
\DeclareMathOperator{\sign}{sign}
\DeclareMathOperator{\dom}{dom}
\newcommand{\range}{\mathrm{range}}
\begin{document}

\title{\Large \textbf{{Uhlmann's theorem for measured divergences}}}

\author[1]{Kun Fang \thanks{kunfang@cuhk.edu.cn}}
\author[2]{Hamza Fawzi \thanks{h.fawzi@damtp.cam.ac.uk}}
\author[3]{Omar Fawzi \thanks{omar.fawzi@ens-lyon.fr}}

\affil[1]{\small School of Data Science, The Chinese University of Hong Kong, Shenzhen,\protect\\  Guangdong, 518172, China}
\affil[2]{\small Department of Applied Mathematics and Theoretical Physics,  University of Cambridge, \protect\\ Cambridge CB3 0WA, United Kingdom}
\affil[3]{\small Univ Lyon, Inria, ENS Lyon, UCBL, LIP, Lyon, France}

\date{}

\maketitle

\begin{abstract}
{Uhlmann's theorem is a cornerstone of quantum information theory, stating that for any quantum state $\rho_{AB}$ and any state $\sigma_A$, there exists an extension $\sigma_{AB}$ of $\sigma_A$ such that the fidelity between $\rho_{AB}$ and $\sigma_{AB}$ equals the fidelity between their marginals $\rho_A$ and $\sigma_A$. This property underpins many results and applications in quantum information science. In this work, we generalize Uhlmann's theorem to a broad class of measured $f$-divergences, including the measured $\alpha$-R\'enyi divergences for all $\alpha \geq 0$. The well-known Uhlmann's theorem for the fidelity corresponds to the special case $\alpha = \frac{1}{2}$. Since most commonly used quantum \Renyi divergences, including the Petz and sandwiched \Renyi divergences, cannot satisfy this property (except for degenerate cases). This fundamentally distinguishes measured $f$-divergences from other quantum divergences and highlights their unique mathematical structure.}
\end{abstract}

\newcommand{\Sm}{S}

\section{Introduction}

For a convex function $f : \RR_{>0} \to \RR$, the $f$-divergence  between discrete probability vectors $P$ and $Q$ is defined as\footnote{We make the following standard conventions to define this expression: $f(0) = \lim_{t \downarrow 0} f(t)$, $0 f(\frac{0}{0}) = 0$ and $0 f(\frac{a}{0}) = \lim_{t \downarrow 0} t f(\frac{a}{t})$ for $a > 0$. Note that the limits may be $+\infty$.} 
\begin{align}
\label{eq: f divergence}
    \Sm_{f}(P \| Q) := \sum_{x} Q(x) f\left(\frac{P(x)}{Q(x)}\right).
\end{align}

This family already appears in R\'enyi's work~\cite{renyi1961measures} and was further studied by Csisz\'ar~\cite{csiszar1967information} and Ali and Silvey~\cite{ali1966general}.
Important examples include the Kullback-Leibler divergence (also called relative entropy) $D( \cdot \| \cdot )$ corresponding to $f_1(t) = t \log t$ with $D(P \| Q) = \Sm_{f_1}(P \| Q)$ and the $\alpha$-R\'enyi divergences $D_{\alpha}( \cdot \| \cdot )$ corresponding to  $f_{\alpha}(t) = \sign(\alpha-1)t^{\alpha}$ with $D_{\alpha}(P \| Q) = \frac{1}{\alpha - 1} \log \left( \sign(\alpha - 1) \Sm_{f_{\alpha}}(P \| Q) \right)$. Such divergences play an important role in various areas of statistics and information theory, we refer to~\cite[Chapter 7]{polyanskiy2025information} for an overview. 

For quantum states modeled by positive semidefinite operators $\rho$ and $\sigma$, there are multiple ways of extending~\eqref{eq: f divergence}; see~\cite{hiai2017different} for a systematic exposition. The main focus in this paper is on the \emph{measured}~\cite{donald1986relative,hiai1991proper} $f$-divergence which is obtained by performing a measurement $M$ on the states $\rho$ and $\sigma$ and computing the $f$-divergence between the resulting probability distributions $P_{\rho, M}$ and $P_{\sigma, M}$. We then take the supremum over all possible measurements or all projective measurements:
\begin{align}
\label{eq: DM def POVM}
    \Sm_{\Meas,f}(\rho \| \sigma) & := \sup_{M \text{ POVM}} \Sm_f(P_{\rho, M} \| P_{\sigma, M}), \\
    \Sm_{\Meas,f}^{\Proj}(\rho \| \sigma) & := \sup_{M \text{ PVM}} \Sm_{f}(P_{\rho,M} \| P_{\sigma, M}),
    \label{eq: DM def PVM}
\end{align}
where in~\eqref{eq: DM def POVM} the supremum is over positive operator-valued measures (POVM) $M$ described by a discrete set $\cX$ and positive semidefinite operators $\{M_x\}_{x \in \cX}$ satisfying $\sum_{x \in \cX} M_x = I$, the supremum in~\eqref{eq: DM def PVM} is over projection-valued measures (PVM) $M$, i.e., POVMs satisfying in addition that $M_x$ is an orthogonal projector for all $x \in \cX$ and $P_{\rho, M}(x) = \tr[\rho M_x]$.

\textbf{Main results:}  We consider the Uhlmann property that is well known for the fidelity~\cite{uhlmann1976transition}: given a quantum state $\rho_{A}$ on $A$ and a state $\sigma_{AR}$ on $A \otimes R$, is there an extension $\rho_{AR}$ of $\rho_{A}$ such that $\Sm_{\Meas,f}(\rho_{AR} \| \sigma_{AR}) = \Sm_{\Meas,f}(\rho_{A} \| \sigma_{A})$? We show in Theorem~\ref{thm: DM f Uhlmann} that if $f^*$ is operator convex and operator monotone and the domain of $f^*$ is unbounded from below, then Uhlmann's property holds. Similarly, if $(f^*)^{-1}$ is operator concave and operator monotone and the range of $f^*$ is unbounded from above, then Uhlmann's property holds with the roles of $\rho$ and $\sigma$ exchanged. In the special case of R\'enyi divergences, this shows the Uhlmann property for all $\alpha \geq 0$ (Corollary~\ref{thm: DM Uhlmann}). This was previously known for $\alpha = \frac{1}{2}$~\cite{uhlmann1976transition} and in the limit $\alpha \to \infty$~\cite{metger2024generalised}. Our result can also be used to recover the regularized Uhlmann's theorem of~\cite{metger2024generalised}. We conclude this paper in Section~\ref{sec: duality} with an intriguing ``duality'' property (Proposition~\ref{thm: duality sum zero}) {relating the Umegaki relative entropy $D(\rho \| \cvxset)$ between $\rho$ and a compact convex set $\cvxset$ and the measured relative entropy $D_{\Meas}(\rho \| \cvxset_{\pl \pl}^{\circ})$ between $\rho$ and the corresponding polar set $\cvxset_{\pl \pl}^{\circ}$.
A similar duality holds between $D(\rho\|\sC^\circ_{\pl \pl})$ and $D_{\Meas}(\rho\|\sC)$. These duality relations provide a clearer understanding of why the measured relative entropy to a set of quantum states becomes superadditive when the polar sets are closed under tensor products. This superadditivity property is fundamental for several recent applications, including~\cite{fang2024generalized,fang2025efficient,fang2025adversarial,arqand2025marginal}.} {The significance and some potential applications of our results are discussed in Section~\ref{sec: discussions}.}

\textbf{Notation:} For a finite-dimensional Hilbert space $A$, we let $\HERM(A)$ denote the space of Hermitian operators on $A$, $\PSD(A) = \{\omega \in \HERM(A) : \omega \geq 0\}$ the set of positive semidefinite operators and $\PD(A) = \{\omega \in \HERM(A) : \omega > 0\}$ the set of positive definite operators. The set of density operators is denoted by $\density(A) = \{ \rho \in \PSD(A) : \tr[\rho] = 1\}$. We drop the Hilbert space from the notation when it is clear from the context. For $\omega \in \HERM$, $\spec(\omega)$ denotes the spectrum of $\omega$. The measured R\'enyi divergences for $\alpha \geq 0$ are defined as
\begin{align}
    D_{\Meas,\alpha}(\rho \| \sigma) 
    &= \frac{1}{\alpha - 1} \log \left( \sign(\alpha - 1) \Sm_{\Meas,f_\alpha}(\rho \| \sigma) \right), & \text{ for $\alpha \neq 1$} \\
    D_{\Meas}(\rho \| \sigma) &= S_{\Meas, f_1}( \rho \| \sigma), & \text{ for $\alpha = 1$.} 
\end{align}
In the limit $\alpha \to \infty$, it is known that $D_{\Meas,\alpha} \to D_{\max}$, see e.g.,~\cite[Appendix A]{mosonyi2015quantum}.

\textbf{Independent work:} We note that in independent and concurrent work, Mazzola, Renner and Sutter~\cite{mazzola2025uhlmann} established a regularized Uhlmann's theorem for the sandwiched R\'enyi divergence for $\alpha \geq \frac{1}{2}$ as well as an Uhlmann inequality for the measured R\'enyi divergences i.e., they show the existence of an extension satisfying  $D_{\Meas,\alpha}(\rho_{AR} \| \sigma_{AR})$ is upper bounded by the corresponding sandwiched R\'enyi divergence.

\section{Variational expression for measured divergences}

In the classical setting, variational expressions for $f$-divergences play an important role (see \cite[Chapter 7]{polyanskiy2025information}), but they are also fundamental in the quantum case in particular for the measured divergences as they are at the core of many applications of such divergences; see e.g.,~\cite{Berta2017,sutter2017multivariate,fang2024generalized}. {In this section, we discuss the variational expression for measured $f$-divergences, which forms an important starting point for finding the Uhlmann's theorem for measured divergences in the next section. These variational expressions have been studied in the von Neumann algebra setting in~\cite[Chapter 5]{hiai2021quantum}. Here, we present the results and their proofs in the finite-dimensional setting for completeness and for accessiblility to general readers.}

In order to state the variational expression, we need to introduce the Fenchel conjugate of a function. For a function $f : \RR \to \RR \cup \{+\infty\}$, the Fenchel conjugate (also called Legendre transform) is defined as~\cite{boyd2004convex} 
\begin{equation}
    \label{eq:fconj}
    f^*(z) := \sup_{t \in \dom(f)} \, \{ tz - f(t) \},
\end{equation}
where $\dom(f) = \{t \in \RR : f(t) < +\infty\}$. The domain of $f^*$ is defined in the same way as
\[
\dom(f^*) = \{z \in \RR : f^*(z) < +\infty \}.
\]

As $f$ should define a divergence, we assume throughout this paper that $(0,\infty) \subset \dom(f) $ and that $f$ is convex. We assume in addition that $f$ is lower semicontinuous. Note that the $f$-divergence only depends on the values $f$ takes on $(0,\infty)$, but the Fenchel conjugate (and hence the variational expressions) depends on the values of $f$ on $\dom(f)$.

We present the following general variational formula for the projective measured $f$-divergence under the sole condition that $f$ is convex and lower semicontinuous. {This result corresponds to~\cite[Theorem 5.7]{hiai2021quantum}.} It can also be seen as a quantum analog of~\cite[Theorem 7.26]{polyanskiy2025information}. 
\begin{shaded}
\begin{proposition}\label{prop: general variational formula}
Let $f: \RR \to \RR\cup \{+\infty\}$ be a convex and lower semicontinuous function and $(0,\infty) \subset \dom(f)$. For any $\rho , \sigma \in \PSD$, 
\begin{align}\label{eq: general variational formula}
  \Sm_{\Meas,f}^{\Proj}(\rho \| \sigma) = \sup_{\omega \in \HERM,\; \spec(\omega) \subset \dom(f^*)}  \tr[\rho \omega] - \tr[\sigma f^*(\omega)],
\end{align}
where $f^*$ is the Fenchel conjugate of $f$ as defined in \eqref{eq:fconj}.
\end{proposition}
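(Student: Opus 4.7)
The plan is to reduce the statement to the classical variational expression for $f$-divergences via Fenchel--Moreau biconjugation, and then to exploit the one-to-one correspondence between pairs (PVM, real-valued function on its outcome set) and spectral decompositions of Hermitian operators.

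The first step is the classical identity
\[
\Sm_{f}(P \| Q) = \sup_{g : \cX \to \dom(f^*)} \sum_{x} P(x) g(x) - Q(x) f^*(g(x)).
\]
Since $f$ is convex, lower-semicontinuous and proper (as $(0,\infty) \subseteq \dom(f)$), Fenchel--Moreau gives $f(t) = \sup_{z \in \dom(f^*)} \{tz - f^*(z)\}$ for all $t \in \dom(f)$. Multiplying by $Q(x) > 0$ and optimising pointwise in $x$ yields the identity on the interior. The boundary cases ($Q(x) = 0$ or $P(x) = 0$) are then handled using the conventions in the footnote, together with the well-known identity $\sup \dom(f^*) = \lim_{s\to\infty} f(s)/s$, which ensures that driving $g(x)$ to the boundary of $\dom(f^*)$ reproduces the value of $0 f(a/0)$ correctly.

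The second step converts between PVMs and Hermitian operators. Given a PVM $M = \{M_x\}_{x \in \cX}$ and a function $g : \cX \to \dom(f^*)$, set $\omega_{M,g} := \sum_{x} g(x) M_x$. Because the $M_x$ are mutually orthogonal projectors summing to $I$, this is a spectral resolution of $\omega_{M,g}$, so $\spec(\omega_{M,g}) \subseteq \{g(x) : x \in \cX\} \subset \dom(f^*)$ and functional calculus gives $f^*(\omega_{M,g}) = \sum_x f^*(g(x)) M_x$. Consequently,
\[
\tr[\rho\,\omega_{M,g}] - \tr[\sigma\, f^*(\omega_{M,g})] = \sum_{x} g(x)\, P_{\rho,M}(x) - f^*(g(x))\, P_{\sigma,M}(x).
\]
Applying the classical identity of step one to $P_{\rho,M}$ and $P_{\sigma,M}$ and then taking the supremum over $M$ and $g$ yields $\Sm_{\Meas,f}^{\Proj}(\rho \| \sigma)$ on the left and a quantity $\leq$ the right-hand side of the proposition. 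For the reverse inequality, every $\omega \in \HERM$ with $\spec(\omega) \subset \dom(f^*)$ can be written via its spectral decomposition as $\omega = \sum_y \lambda_y P_y$, with $\{P_y\}$ a PVM and each $\lambda_y \in \dom(f^*)$, so $\omega = \omega_{M,g}$ for $M = \{P_y\}$, $g(y) = \lambda_y$; hence every admissible $\omega$ on the right-hand side is attained.

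The main point that I expect to require care is the boundary analysis in the first step: verifying the classical variational identity as stated, with $g$ restricted to $\dom(f^*)$, when $\sigma$ (equivalently $Q$) is rank-deficient and $\dom(f^*)$ may be a proper subset of $\RR$. Once this is pinned down, the passage to PVMs via functional calculus is essentially bookkeeping, since the constraint $\spec(\omega) \subset \dom(f^*)$ is exactly what makes $\tr[\sigma f^*(\omega)]$ finite for arbitrary $\sigma \in \PSD$.
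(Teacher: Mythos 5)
Your proposal is correct and follows essentially the same route as the paper: Fenchel--Moreau biconjugation $f=(f^*)^*$ combined with the observation that for a PVM $\{M_x\}$ one has $\sum_x f^*(z_x)M_x = f^*\bigl(\sum_x z_x M_x\bigr)$, so that pairs (PVM, labels in $\dom(f^*)$) correspond exactly to Hermitian operators with spectrum in $\dom(f^*)$. The only difference is that you isolate the classical variational identity as a separate first step and explicitly flag the boundary cases $\tr[M_x\sigma]=0$ or $\tr[M_x\rho]=0$, which the paper's proof handles only implicitly through the stated conventions.
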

\end{shaded}
\begin{proof}
Since $f$ is convex and lower semicontinuous, we know that $f = (f^{*})^*$~\cite[Theorem 12.2]{rockafellar}, i.e., for any {$t \in \RR$}
\begin{equation}
\label{eq:var-fenchel-proof}
f(t) = \sup_{z \in \dom(f^*)} \{tz - f^*(z)\}.
\end{equation}
Hence for any $r,s > 0$ we get
\[
sf(r/s) = \sup_{z \in \dom(f^*)} \{ rz - sf^*(z) \}.
\]
{This expression also matches the conventions we took when $r = 0$ or $s = 0$. In fact, if $r = 0, s = 1$, 
we use \eqref{eq:var-fenchel-proof} to get $f(0) = \sup_{z \in \dom(f^*)}\{-f^*(z)\}$. If $r = s = 0$, both sides are equal to $0$.  For $r > 0$ and $s = 0$, we need to show that $\lim_{s \downarrow 0} s f(r/s) = \sup_{z \in \dom(f^*)} \{ rz\}$. By performing the change of variable $x = r/s$, this is equivalent to showing that $\lim_{x \to \infty} \frac{f(x)}{x} = \sup \dom(f^*)$. To establish this equality, we prove the two inequalities separately. We start with ($\geq$): we have $\frac{f(x)}{x} = \sup_{z \in \dom(f^*)} \{ z - \frac{f^*(z)}{x} \} \geq z - \frac{f^*(z)}{x}$ for any $z \in \dom(f^*)$. Thus, $\lim_{x \to \infty} \frac{f(x)}{x} \geq \sup \dom(f^*)$. Now for ($\leq$), let $w < \lim_{x \to \infty} \frac{f(x)}{x}$ and let $x_{n} \geq n$ be such that $f(x) \geq w x$ for all $x \geq x_{n}$ ($n$ is an integer $n \geq 2$). Let $y_n$ be any subderivative of $f$ at $x_{n}$. By definition, for any $x \in \RR$, $f(x) \geq f(x_{n}) + y_n(x-x_n)$. This implies that $y_n x - f(x) \leq y_n x_n - f(x_n)$ and thus $f^*(y_n) \leq y_n x_n - f(x_n) < +\infty$ so $y_n \in \dom(f^*)$. In addition, we have $y_n \geq \frac{f(x_n) - f(1)}{x_n - 1} \geq w \frac{x_{n}}{x_{n} - 1} - \frac{f(1)}{x_n - 1}$ and $1 \in \dom(f)$ by assumption. As a result, for any $\epsilon > 0$, there exists an $n$ such that we obtain $y_n \geq w - \epsilon$. Thus $\sup \dom(f^*) \geq w$ which concludes the second inequality. 
\footnote{We thank an anonymous reviewer for suggesting this argument.
}}
Plugging into~\eqref{eq: DM def PVM}, we get \begin{align}
  \Sm_{\Meas,f}^{\Proj}(\rho \| \sigma)
  &= \sup_{\{M_x\} \text{ PVM}} \sum_{x} \tr[M_x \sigma] f\left(\frac{\tr[M_x \rho]}{\tr[M_x \sigma]}\right) \\
  & = \sup_{\{M_x\} \text{ PVM}} \: \sum_x \sup_{z_x \in \dom(f^*)} \tr[M_x \rho ] z_x - \tr[M_x \sigma] f^*(z_x)
   \label{eq: M P equiv tmp1}\\
  & = \sup_{\{M_x\} \text{ PVM}} \:  \sup_{\{z_x\}} \sum_x \tr[z_x M_x\rho] - \tr[f^*(z_x) M_x \sigma] \\
  & = \sup_{\{M_x\} \text{ PVM}} \:  \sup_{\{z_x\} } \tr \left[ \sum_x z_xM_x \rho \right] - \tr \left[ f^*\left(\sum_x z_x M_x\right) \sigma \right] \label{eq: M P equiv tmp2}\\
  & = \sup_{\omega \in \HERM, \spec(\omega) \subset \dom(f^*)} \tr[\rho \omega] - \tr[\sigma f^*(\omega)],
\end{align}
where everywhere $\{M_x\}$ is a projective measurement and $z_x \in \dom(f^*)$. Note that in the third line we used the fact that $M_x$ are orthogonal projectors.
\end{proof}

Table \ref{tbl:fenchel} shows the Fenchel conjugate of some common functions.

\begin{table}
\footnotesize
\centering
{\setlength\extrarowheight{2pt}
\begin{tabular}{llllp{2.5cm}p{2.5cm}}
\toprule
$f(t)$ & $\dom(f)$ & $f^*(z)$ & $\dom(f^*)$ & Props. of $f^*$\newline on $\dom(f^*)$ & Props. of $(f^*)^{-1}$\newline on $\range(f^*)$\\ \midrule
$-t^{\alpha}$ ($\alpha \in [0,1)$) & $\RR_{>0}$ & $(1-\alpha) \left(\frac{-z}{\alpha}\right)^{-\frac{\alpha}{1-\alpha}}$ & $\RR_{<0}$ & 
$\alpha \in [0,\frac{1}{2}]$: op. convex and monotone
& $\alpha \in [\frac{1}{2},1)$: op. concave and monotone
\\ \hline
$\begin{cases}
    t^{\alpha} & t \geq 0\\
    0 & t < 0
\end{cases}$ \;\; ($\alpha > 1$) & $\RR$ & $(\alpha-1)\left(\frac{z}{\alpha}\right)^{\frac{\alpha}{\alpha-1}}$ & $\RR_{>0}$ &
$\alpha \geq 2$: op. convex
& 
$\alpha > 1$: op. concave and monotone
\\ \hline
$t\log(t)$ & $\RR_{>0}$ & $e^{z-1}$ & $\RR$ & Not applicable & Op. concave and monotone \\ \hline
$\frac{1}{2}|t-1|$ & $\RR$ & $z$ & $[-\frac{1}{2},\frac{1}{2}]$ & Linear monotone & Linear monotone\\
\bottomrule
\end{tabular}}
\caption{Convex functions $f$ and their Fenchel conjugates}
\label{tbl:fenchel}
\end{table}

\begin{remark}
\label{rem: examples DM proj}
We can recover \Renyi divergences for $\alpha \in [0,1)$ by choosing
\[
f_{\alpha}(t) = -t^{\alpha}, \qquad \dom(f_{\alpha}) = \RR_{>0}
\]
whose Fenchel conjugate is given in Table \ref{tbl:fenchel}. In this case, the variational expression \eqref{eq: general variational formula} becomes:
\[
  \Sm_{\Meas,f_{\alpha}}^{\Proj}(\rho \| \sigma) = \sup_{\omega < 0}  \tr[\rho \omega] - (1-\alpha) \tr\left[\sigma \left(-\frac{\omega}{\alpha}\right)^{-\frac{\alpha}{1-\alpha}}\right],
\]
which becomes after the change of variable $\gamma = -\frac{\omega}{\alpha}$:
\[
  \Sm_{\Meas,f_{\alpha}}^{\Proj}(\rho \| \sigma) = \sup_{\gamma > 0}  -\alpha \tr[\rho \gamma] - (1-\alpha) \tr\left[\sigma \gamma^{-\frac{\alpha}{1-\alpha}}\right].
\]
Then, since $D_{\Meas, \alpha}(\rho \| \sigma) = \frac{1}{\alpha-1} \log\left( -\Sm^{\Proj}_{\Meas,f_{\alpha}}(\rho\|\sigma)\right)$, we get the variational expression established in~\cite{Berta2017} for $\alpha \in (0,1)$ (up to a simple change of variable that is described in Remark~\ref{rem: examples DM equality}). Note that for $\alpha = 0$, we interpret $\gamma^0$ as the projector on the support of $\gamma$.

When $\alpha > 1$, we consider the function\footnote{The reason we consider the function \eqref{eq:falpha>1} defined on the whole of $\RR$ instead of just $\RR_{>0}$ is that the resulting Fenchel conjugate is defined on $\RR_{>0}$ instead of the whole of $\RR$ which is useful in the application of Theorem \ref{thm: POVM equals PVM}}
\begin{equation}
    \label{eq:falpha>1}
    f_{\alpha}(t) = \begin{cases} t^{\alpha} & t \geq 0\\
0 & t < 0,
\end{cases}
\qquad \dom(f_{\alpha}) = \RR
\end{equation}
whose Fenchel conjugate is given in Table \ref{tbl:fenchel}. The expression~\eqref{eq: general variational formula} gives 
\[
\Sm_{\Meas,f_{\alpha}}^{\Proj}(\rho \| \sigma) = \sup_{\omega > 0}  \alpha \tr[\rho \omega] + (1-\alpha) \tr\left[\sigma \omega^{\frac{\alpha}{1-\alpha}}\right],
\]
which also matches with the expression found in~\cite{Berta2017}.

For the measured {relative entropy}, we choose $f_1(t) = t \log t$ and $\dom(f_1) = \RR_{>0}$. This gives
\begin{align}
\label{eq:kl measured}
\Sm_{\Meas,f_{1}}^{\Proj}(\rho \|\sigma) = \sup_{\omega \in \HERM}  \tr[\rho \omega] - \tr\left[\sigma e^{\omega - 1}\right],
\end{align}
which again is equivalent to the expressions obtained in~\cite{petzquantum,Berta2017}.
\end{remark}

The variational expression \eqref{eq: general variational formula} holds for all convex lower semicontinuous functions $f$. However, the optimization program as written is in general not a convex one unless additional conditions are imposed, such as operator convexity of $f^*$. The theorem below gives a general sufficient condition under which $S_{\Meas,f}^{\Proj}(\rho\|\sigma)$ can be expressed as a convex program and which guarantees that {the measured divergence and the one restricted to projective measurements match.} {This is similar to the result of~\cite[Theorem 5.8]{hiai2021quantum}.}

\begin{shaded}
\begin{theorem}\label{thm: POVM equals PVM}
    Let $f:\RR\to \RR\cup\{+\infty\}$ be a convex and lower semicontinuous function such that $(0,\infty) \subset \dom(f)$, and let $f^*$ be its Fenchel conjugate with domain $\dom(f^*)$.
    
    Assume $\psi:J\to \dom(f^*)$ is a one-to-one map from an interval $J \subset \RR$ to $\dom(f^*)$ such that
    \begin{itemize}
        \item $\psi$ is operator concave on $J$,
        \item $f^* \circ \psi$ is operator convex on $J$.
    \end{itemize}
    Then we have, for any $\rho,\sigma \in \PSD$
    \begin{equation}
    \label{eq:SM variational formula change of variables}
\Sm_{\Meas,f}(\rho \| \sigma) = \Sm_{\Meas,f}^{\Proj}(\rho \| \sigma) = \sup_{\gamma \in \HERM,\; \spec(\gamma) \subset J} \left\{ \tr[\rho \psi(\gamma)] - \tr[\sigma (f^*\circ\psi)(\gamma)] \right\}.
    \end{equation}
\end{theorem}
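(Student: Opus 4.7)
The plan is to sandwich all three quantities in the statement. Since $\Sm_{\Meas,f}^{\Proj}(\rho\|\sigma) \leq \Sm_{\Meas,f}(\rho\|\sigma)$ is immediate because every PVM is a POVM, it suffices to establish (a) that the right-hand sup expression equals $\Sm_{\Meas,f}^{\Proj}(\rho\|\sigma)$, and (b) that $\Sm_{\Meas,f}(\rho\|\sigma)$ is upper bounded by that same sup expression. Combining (a), (b), and the trivial inequality forces all three quantities to coincide.

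For step (a), I would perform the change of variables $\omega = \psi(\gamma)$ in the variational expression from Proposition~\ref{prop: general variational formula}. The one-to-one map $\psi: J \to \dom(f^*)$ induces, via functional calculus, a correspondence between Hermitians $\gamma$ with $\spec(\gamma) \subset J$ and Hermitians $\omega$ with $\spec(\omega) \subset \dom(f^*)$, under which $f^*(\omega) = (f^* \circ \psi)(\gamma)$. Substituting into Proposition~\ref{prop: general variational formula} immediately produces the claimed formula for $\Sm_{\Meas,f}^{\Proj}(\rho\|\sigma)$.

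For step (b), which is the crux, I would fix a POVM $\{M_x\}$ and reuse the classical variational identity $s f(r/s) = \sup_{z \in \dom(f^*)}\{rz - s f^*(z)\}$ already exploited in the proof of Proposition~\ref{prop: general variational formula}. Parameterizing $z_x = \psi(\lambda_x)$ with $\lambda_x \in J$, this gives
\[
\Sm_f(P_{\rho,M}\|P_{\sigma,M}) = \sup_{\lambda_x \in J} \sum_x \tr[\rho M_x]\,\psi(\lambda_x) - \tr[\sigma M_x]\,(f^*\circ\psi)(\lambda_x).
\]
I then set $\gamma := \sum_x \lambda_x M_x$, which is Hermitian and whose spectrum lies in the convex hull of $\{\lambda_x\}$, hence in the interval $J$. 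The decisive step is the operator Jensen inequality in its Hansen-Pedersen form, applied with Kraus operators $A_x = M_x^{1/2}$ (so that $\sum_x A_x^* A_x = I$) and trivial scalar inputs $\lambda_x I$: operator concavity of $\psi$ yields $\psi(\gamma) \geq \sum_x \psi(\lambda_x) M_x$, while operator convexity of $f^*\circ\psi$ yields $(f^*\circ\psi)(\gamma) \leq \sum_x (f^*\circ\psi)(\lambda_x) M_x$. Pairing both inequalities with $\rho, \sigma \geq 0$ and taking suprema over $\{\lambda_x\}$ and then over POVMs closes the bound. The main obstacle I anticipate is correctly invoking operator Jensen against general (non-projective) POVM elements; this is precisely the difficulty that the Kraus-operator formulation with $A_x = M_x^{1/2}$ is designed to dissolve, reducing the operator inequality to its familiar scalar form through the decomposition $M_x = M_x^{1/2} \cdot M_x^{1/2}$.
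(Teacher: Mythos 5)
Your proposal is correct and follows essentially the same route as the paper's own proof: a change of variables $\omega=\psi(\gamma)$ to get the second equality from Proposition~\ref{prop: general variational formula}, followed by fixing an arbitrary POVM, parameterizing $z_x=\psi(\lambda_x)$, and applying the Hansen--Pedersen operator Jensen inequality with $\sqrt{M_x}$ to bound the POVM value by the supremum over $\gamma=\sum_x\lambda_x M_x$ with $\spec(\gamma)\subset J$. No gaps; this matches the paper's argument step for step.
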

\end{shaded}

\begin{remark}\label{rmk: DPI}
Combining Theorem~\ref{thm: POVM equals PVM} with ~\cite[Corollary 4.20]{hiai2017different}, it follows that under the conditions of the theorem, {the data-processing inequality of $S_{\Meas,f}$ holds under any positive trace-preserving maps.}
\end{remark}

\begin{remark}
\label{rem: examples DM equality}
    For all the examples mentioned in Remark~\ref{rem: examples DM proj}, such a $\psi$ can be found. In fact, for $\alpha \in (0,\frac{1}{2}]$, $f_{\alpha}^*(z) = (1-\alpha) (\frac{-z}{\alpha})^{-\frac{\alpha}{1-\alpha}}$ is already operator convex so $\psi(z) = z$ works. For $\alpha \in [\frac{1}{2}, 1)$, we can choose $J = \RR_{>0}$ and $\psi(\lambda) = (f_{\alpha}^*)^{-1}(\lambda) = -\alpha (\frac{\lambda}{1-\alpha})^{-\frac{1-\alpha}{\alpha}}$. For $\alpha > 1$, we can also choose $J = \RR_{>0}$ and $\psi(\lambda) = (f_{\alpha}^*)^{-1}(\lambda) =\alpha (\frac{\lambda}{\alpha-1})^{\frac{\alpha-1}{\alpha}}$. Similarly, for $f(t)=t\log t$, we choose $J = \RR_{>0}$ and $\psi(\lambda) = (f^*)^{-1}(\lambda) = 1 + \log \lambda$. For all these examples, the result~\eqref{eq:SM variational formula change of variables} was already established in~\cite{Berta2017}. 
    
    Another example is the measured total variation distance which is obtained by taking $f_{TV}(t) = \frac{1}{2} |t-1|$ and $\dom(f) = \RR$. With this choice $f_{TV}^*(z) = z$ for $z \in \dom(f_{TV}^*) = [-\frac{1}{2},\frac{1}{2}]$. Expression~\eqref{eq:SM variational formula change of variables} then gives the well-known variational formulation of the trace distance.
\end{remark}

\begin{proof}[Proof of Theorem \ref{thm: POVM equals PVM}]
The second equality in~\eqref{eq:SM variational formula change of variables} is a simple change of variables to \eqref{eq: general variational formula}, namely $\omega = \psi(\gamma)$.
The inequality $\Sm_{\Meas,f}(\rho \| \sigma) \geq \Sm_{\Meas,f}^{\Proj}(\rho \| \sigma)$ is clear and so it remains to show the reverse inequality.
Fix $\{M_x\}$ any POVM and note that we can write, using the same argument as in \eqref{eq: M P equiv tmp1}
\[
\sum_{x} \tr[M_x \sigma] f\left(\frac{\tr[M_x \rho]}{\tr[M_x \sigma]}\right) = \sup_{\{z_x\} \subset \dom(f^*)} \tr\left[\sum_x z_x M_x \rho\right] - \tr\left[\sum_x f^*(z_x) M_x \sigma\right].
\]
Since $\psi$ is a one-to-one map from $J$ to $\dom(f^*)$, we can do the change of variables $z_x = \psi(\lambda_x)$ where $\lambda_x \in J$, and so we get
\begin{equation}
\label{eq: POVM psi 11}
\sum_{x} \tr[M_x \sigma] f\left(\frac{\tr[M_x \rho]}{\tr[M_x \sigma]}\right) = \sup_{\{\lambda_x\} \subset J} \tr\left[\sum_x \psi(\lambda_x) M_x \rho\right] - \tr\left[\sum_x f^*(\psi(\lambda_x)) M_x \sigma\right].
\end{equation}
By assumption, $\psi$ is operator concave, and so using the operator Jensen inequality~\cite{hansen2003jensen}
\[
\psi\left(\sum_x \lambda_x M_x\right) = \psi\left(\sum_x \sqrt{M_x} (\lambda_x I) \sqrt{M_x}\right) \geq \sum_{x} \psi(\lambda_x) M_x,
\]
and similarly, since $f^* \circ \psi$ is operator convex we have
\[
\sum_x f^*(\psi(\lambda_x)) M_x \geq (f^* \circ \psi) \left( \sum_x \lambda_x M_x \right).
\]
Combining this relation with \eqref{eq: POVM psi 11} gives
\[
\begin{aligned}
\sum_{x} \tr[M_x \sigma] f\left(\frac{\tr[M_x \rho]}{\tr[M_x \sigma]}\right) &\leq \sup_{\{\lambda_x\} \subset J} \tr \left[ \psi\left(\sum_x \lambda_x M_x\right) \rho \right] - \tr \left[ f^* \circ \psi\left( \sum_x \lambda_x M_x \right) \sigma \right]\\
& \leq \sup_{\gamma \in \HERM,\; \spec(\gamma) \subset J} \tr [ \psi(\gamma) \rho ] - \tr[ (f^* \circ \psi)(\gamma) \sigma ]\\
&= \Sm^{\Proj}_{\Meas,f}(\rho \| \sigma)
\end{aligned}
\]
where in the second inequality above we used $\gamma = \sum_{x} \lambda_x M_x$ whose spectrum is in $J$ since $M_x \geq 0$ and $\sum_x M_x = I$.
\end{proof}

\section{Uhlmann's theorem for measured divergences}

Uhlmann’s theorem is a fundamental result in quantum information theory that describes the relationship between the fidelity of mixed quantum states and their extensions~\cite{uhlmann1976transition}. More specifically, it states that for any two quantum states $\rho_{A}$ and $\sigma_{AR}$, there exists an extension $\rho_{AR}$ of $\rho_{A}$ such that the fidelity $F(\rho_{AR},\sigma_{RA})$ equals the fidelity $F(\rho_A,\sigma_A)$. The following result generalizes the renowned Uhlmann’s theorem to measured $f$-divergences in general.

\begin{shaded}
\begin{theorem}[Uhlmann's theorem for measured $f$-divergences]\label{thm: DM f Uhlmann}
Let $f$ be a convex lower semicontinuous function {such that $(0,\infty) \subset \dom(f)$}.
 If its Fenchel conjugate $f^*$ is operator convex and operator monotone on $\dom(f^*)$ and $\inf \{z : z \in \dom(f^*)\} = -\infty$, then for any $\rho_A \in \density(A)$ and $\sigma_{AR} \in \PSD(AR)$, we have
  \begin{align}
 \label{eq: thm uhlmann statement 1}
     \inf_{\substack{\rho_{AR} \in \PSD(AR) \\ \tr_{R} \rho_{AR} = \rho_A}} \Sm_{\Meas, f}(\rho_{AR} \| \sigma_{AR}) = \Sm_{\Meas,f}(\rho_{A} \| \sigma_A).
 \end{align}
 Similarly, if $f^* : \dom(f^*) \to \range(f^*)$ is one-to-one and $(f^*)^{-1}$ is operator concave and operator monotone on $\range(f^*)$ and $\sup \{\lambda : \lambda \in \range(f^*)\} = +\infty$, then for any $\rho_{AR} \in \density(AR)$ and $\sigma_A \in \PSD(A)$, we have
 \begin{align}
 \label{eq: thm uhlmann statement}
     \inf_{\substack{\sigma_{AR} \in \PSD(AR) \\ \tr_{R} \sigma_{AR} = \sigma_A}} \Sm_{\Meas, f}(\rho_{AR} \| \sigma_{AR}) = \Sm_{\Meas,f}(\rho_{A} \| \sigma_A).
 \end{align}
 In addition, the infimum in both equations is attained.
 \end{theorem}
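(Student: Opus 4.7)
My plan is to handle both claims via a minimax/SDP-duality argument applied to the variational formulas of Proposition~\ref{prop: general variational formula} and Theorem~\ref{thm: POVM equals PVM}. The direction $\Sm_{\Meas,f}(\rho_A\|\sigma_A) \leq \Sm_{\Meas,f}(\rho_{AR}\|\sigma_{AR})$ is the data-processing inequality which, under the operator-convexity hypothesis on $f^*$ (case 1, via $\psi=\id$) or the operator-concavity hypothesis on $(f^*)^{-1}$ (case 2, via $\psi=(f^*)^{-1}$), follows from Remark~\ref{rmk: DPI}. The nontrivial content is the reverse inequality and attainment. The structural idea is that any $A$-test operator $Y_A$ can be lifted to $Y_A\otimes I_R$ on $AR$, and operator monotonicity of $f^*$ (resp.\ $(f^*)^{-1}$) lets me compare the objective along the one-sided operator inequality $Y_A\otimes I_R \leq \omega$ (resp.\ $\geq \gamma$) that naturally emerges from SDP duality after swapping $\inf$ and $\sup$.

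\textbf{First claim.} I will start from
\[
\inf_{\substack{\rho_{AR} \geq 0 \\ \tr_R \rho_{AR}=\rho_A}} \Sm_{\Meas,f}(\rho_{AR}\|\sigma_{AR}) = \inf_{\rho_{AR}} \sup_{\omega:\,\spec(\omega)\subset\dom(f^*)} \bigl\{ \tr[\rho_{AR}\omega] - \tr[\sigma_{AR} f^*(\omega)] \bigr\}.
\]
The joint objective is linear in $\rho_{AR}$ and, by operator convexity of $f^*$, concave and upper-semicontinuous in $\omega$; the set of extensions is compact convex. I plan to swap $\inf$ and $\sup$ via Sion's minimax theorem, first restricting to a compact truncation $\spec(\omega)\subset[-N,N]\cap\dom(f^*)$ and then sending $N\to\infty$ by a monotone-convergence argument together with the compactness of the $\rho_{AR}$-side. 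The inner infimum over $\rho_{AR}$ is an SDP with partial-trace constraint whose dual is $\sup\{\tr[Y_A\rho_A] : Y_A\in\HERM(A),\; Y_A\otimes I_R \leq \omega\}$, with strong duality via a dual Slater point (take $Y_A=\lambda I_A$ with $\lambda < \lambda_{\min}(\omega)$). Operator monotonicity of $f^*$ then converts the constraint $Y_A\otimes I_R \leq \omega$ into $f^*(Y_A)\otimes I_R = f^*(Y_A\otimes I_R) \leq f^*(\omega)$, hence $\tr[\sigma_A f^*(Y_A)] \leq \tr[\sigma_{AR} f^*(\omega)]$, yielding
\[
\inf_{\rho_{AR}} \Sm_{\Meas,f}(\rho_{AR}\|\sigma_{AR}) \leq \sup_{Y_A}\bigl\{ \tr[Y_A\rho_A] - \tr[\sigma_A f^*(Y_A)] \bigr\} = \Sm_{\Meas,f}(\rho_A\|\sigma_A).
\]
The spectrum condition $\spec(Y_A)\subset\dom(f^*)$ is automatic: the upper end of $\spec(Y_A)$ is bounded by $\lambda_{\max}(\omega)\in\dom(f^*)$, and $\inf\dom(f^*)=-\infty$ handles the lower end.

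\textbf{Second claim and attainment.} For the second claim I invoke Theorem~\ref{thm: POVM equals PVM} with $\psi=(f^*)^{-1}$ on $J=\range(f^*)$, which is admissible since $\psi$ is operator concave by hypothesis and $f^*\circ\psi=\id$ is trivially operator convex. This rewrites $\Sm_{\Meas,f}(\rho\|\sigma) = \sup_\gamma \{\tr[\rho(f^*)^{-1}(\gamma)] - \tr[\sigma\gamma]\}$ with $\spec(\gamma)\subset J$. Now the objective is linear in $\sigma_{AR}$; the same minimax/SDP route produces the inner problem $\sup_{\sigma_{AR}}\tr[\sigma_{AR}\gamma] = \inf\{\tr[Z_A\sigma_A] : Z_A\otimes I_R \geq \gamma\}$, and operator monotonicity of $(f^*)^{-1}$ transforms $Z_A\otimes I_R \geq \gamma$ into $(f^*)^{-1}(Z_A)\otimes I_R \geq (f^*)^{-1}(\gamma)$, giving $\tr[\rho_{AR}(f^*)^{-1}(\gamma)] \leq \tr[\rho_A (f^*)^{-1}(Z_A)]$ and hence the matching upper bound; the hypothesis $\sup J = +\infty$ plays the symmetric role of making $\spec(Z_A)\subset J$ automatic. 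Attainment of the infimum in both~\eqref{eq: thm uhlmann statement 1} and~\eqref{eq: thm uhlmann statement} is then Weierstrass: the feasible extensions form a compact set, and the divergence is lower-semicontinuous in its argument as the supremum of continuous affine functionals. The main technical subtlety I expect is the minimax exchange over the unbounded spectral domain of test operators; the truncation-and-limit scheme should neutralize it, since Sion's theorem applies verbatim on each compact truncation and monotone convergence plus continuity of the linear objective preserve the identity in the limit $N\to\infty$.
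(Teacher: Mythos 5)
Your proposal is correct and follows essentially the same route as the paper: the variational formula of Theorem~\ref{thm: POVM equals PVM} with $\psi=\id$ (resp.\ $\psi=(f^*)^{-1}$), Sion's minimax exchange using compactness of the set of extensions, SDP duality of the partial-trace-constrained linear problem, operator monotonicity of $f^*$ (resp.\ $(f^*)^{-1}$) to compare $f^*(\Lambda_A)\otimes I_R$ with $f^*(\gamma_{AR})$, and the unboundedness condition on $\dom(f^*)$ (resp.\ $\range(f^*)$) to keep $\spec(\Lambda_A)$ in the admissible set. The only deviations are cosmetic: your truncation-and-limit step for Sion is unnecessary (only the minimization side needs to be compact, and that is the extension set), you obtain the easy inequality from DPI rather than by restricting to $\gamma_{AR}=\Lambda_A\otimes I_R$, and your Weierstrass argument for attainment is a streamlined version of the paper's subsequence argument.
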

 \end{shaded}
 \begin{remark}
    We leave it as an open question whether such a result can be extended to other choices of $f$. However, {only assuming that $f^*$ is operator convex and monotone on $\dom(f^*)$ is not sufficient.}
    For example, the trace distance does not satisfy the Uhlmann property, but as mentioned in Remark~\ref{rem: examples DM equality}, we have $f^*_{TV}(z) = z$ for $z \in \dom(f^*) = [-\frac{1}{2},+\frac{1}{2}]$ which is clearly operator convex and monotone. 
 \end{remark}
\begin{remark}{
    Let $\phi_{RA}$ be a purification of $\rho_A$. Then the set of all extensions of $\rho_A$ can be obtained by ranging over all completely positive trace-preserving (CPTP) maps $\mathcal{E}_{R \to R}$ acting on the purifying system $R$, i.e., 
    \begin{align*} 
        \{\rho_{AR} \in \PSD(RA): \tr_R \rho_{AR} = \rho_A\} =\{(\id_A \otimes \mathcal{E}_{R \to R})(\phi_{AR}) : \mathcal{E}_{R \to R} \in \text{ CPTP}\}.
    \end{align*} 
    This can be similarly proved as~\cite[Lemma 10]{fang2019quantum}. More explicitly, it is clear that ``$\supset$'' holds. Now for any extension $\rho_{AR}$ of $\rho_A$. We have $\bar \rho_{AR} = \rho_A^{-1/2} \rho_{AR} \rho_A^{-1/2}$. Then we have $\bar \rho_{AR} \geq 0$ and $\tr_{R} \bar \rho_{AR} = I_A$. From the Choi-Jamiolkowski isomorphism, we know that there exists a CPTP map $\cN_{R\to R}$ such that $\bar \rho_{AR} = \id_A\ox \cN_{R\to R}(\Phi_{RA})$, where $\Phi_{RA}$ denotes the unnormalized maximally entangled state. Thus, we get $\rho_{AR} = \id_A\ox \cN_{R\to R}(\rho_A^{1/2}\Phi_{RA}\rho_A^{1/2})$. Denote $\psi_{RA} := \rho_A^{1/2}\Phi_{RA}\rho_A^{1/2}$. We know that $\psi_{RA}$ is a purification of $\rho_A$. Due to the isometric equivalence between purifications, there exists an isometry $\cU_{R\to R}$ on the system $R$ such that $\psi_{RA} = \id_A \ox \cU_{R\to R}(\phi_{RA})$. This gives $\rho_{AR} = \id_A \ox \cN_{R\to R} \circ \cU_{R\to R}(\phi_{RA})$. So we have the ``$\subset$'' direction.  
    As a result, Eq.~\eqref{eq: thm uhlmann statement 1} can be equivalently reformulated in terms of purifications and channels. Let $\phi_{AR}$ be any purification of $\rho_A$. Then we have
    \begin{align*}
    \inf_{\mathcal{E} \in \text{CPTP}} \Sm_{\Meas, f}((\id_A \otimes \mathcal{E}_{R \to R})(\phi_{AR}) \| \sigma_{AR}) = \Sm_{\Meas,f}(\rho_{A} \| \sigma_A),
    \end{align*}
    where the infimum is over all CPTP maps $\mathcal{E}_{R \to R}$ acting on the purifying system $R$. When $\sigma_{AR}$ is a pure state, this formulation closely parallels the original Uhlmann's theorem for the fidelity~\cite{uhlmann1976transition}, which involves optimization over isometries on the purifying system. 
    Similarly, Eq.~\eqref{eq: thm uhlmann statement} can be reformulated as: for any purification $\psi_{AR}$ of $\sigma_A$, we have
    \begin{align*}
    \inf_{\mathcal{E} \in \text{CPTP}} \Sm_{\Meas, f}(\rho_{AR} \| (\id_A \otimes \mathcal{E}_{R \to R})(\psi_{AR})) = \Sm_{\Meas,f}(\rho_{A} \| \sigma_A).
    \end{align*}}
\end{remark}

\begin{proof}
 We establish the result for Eq.~\eqref{eq: thm uhlmann statement 1} first. 
 Recall the variational expression in Eq.~\eqref{eq:SM variational formula change of variables} with $\psi = \id$:
 \begin{align}
    \Sm_{\Meas,f}(\rho \| \sigma) = \sup_{\substack{\gamma \in \HERM\\ \spec(\gamma) \subset \dom(f^*)}} \left\{ \tr[\rho \gamma] - \tr[\sigma f^*(\gamma)] \right\}.
     \label{eq: var expression Qalpha proof uhlmann}
 \end{align}
We now use Sion's minimax theorem~\cite[Corollary 3.3]{Sion1958} to get 
\begin{align*}
    \inf_{\substack{\rho_{AR} \in \PSD(A R) \\ \tr_{R} \rho_{AR} = \rho_A}} \Sm_{\Meas, f}(\rho_{AR} \| \sigma_{AR}) =  \sup_{\substack{\gamma_{AR} \in \HERM(AR) \\ \spec(\gamma_{AR}) \subset \dom(f^*)}} \inf_{\substack{\rho_{AR} \in \PSD(A R) \\ \tr_{R} \rho_{AR} = \rho_A}} \left\{ \tr[\rho_{AR} \gamma_{AR}] - \tr[\sigma_{AR} f^*(\gamma_{AR})] \right\}.
\end{align*}
In fact, the set $\{\rho_{AR} \in \PSD(A R) : \tr_{R} \rho_{AR} = \rho_A\}$ is compact and convex and $\{\gamma_{AR} \in \HERM(AR): \spec(\gamma_{AR}) \subset \dom(f^*)\}$ is convex. In addition the expression $\tr[\rho_{AR} \gamma_{AR}] - \tr[\sigma_{AR} f^*(\gamma_{AR})]$ is concave in {$\gamma_{AR}$} (because $f^*$ is operator convex) and linear in {$\rho_{AR}$}. Now, by semidefinite programming duality {(e.g.~\cite[Section 1.2.3]{watrous2018theory})}, we have for a fixed $\gamma_{AR} \in \HERM(AR)$
\begin{align}
    \inf_{\substack{\rho_{AR} \in \PSD(A R) \\ \tr_{R} \rho_{AR} = \rho_A}} \tr[\rho_{AR} \gamma_{AR}] = \sup_{\substack{\Lambda_A \in \HERM(A) \\ \Lambda_A \otimes I_{R} \leq \gamma_{AR}}} \tr[\rho_A \Lambda_A].
\end{align}
In fact it is clear that the infimum is finite as $0 \leq \rho_{AR} \leq \tr[\rho_{A}] I_{AR}$ and $\Lambda_A = -(\| \gamma_{AR} \|_{\infty} + 1) I_{A}$ is strictly feasible for the dual. As a result,
\begin{align}
    \inf_{\substack{\rho_{AR} \in \PSD(A R) \\ \tr_{R} \rho_{AR} = \rho_A}} \Sm_{\Meas, f}(\rho_{AR} \| \sigma_{AR}) =  \sup_{\substack{\gamma_{AR} \in \HERM(AR)\\ \spec(\gamma_{AR}) \subset \dom(f^*)}} \sup_{\substack{\Lambda_A \in \HERM(A) \\ \Lambda_A \otimes I_{R} \leq \gamma_{AR}}} \left\{\tr[\rho_A \Lambda_A] - \tr[\sigma_{AR} f^*(\gamma_{AR})] \right\}.
\end{align}
Note that we can obtain a lower bound on the supremum (the easy direction) by restricting ourselves to $\gamma_{AR} = \Lambda_A \otimes I_{R}$ and observing that $f^*(\Lambda_A \otimes I_R) = f^*(\Lambda_A) \otimes I_R$:
\begin{align}
     \inf_{\substack{\rho_{AR} \in \PSD(A R) \\ \tr_{R} \rho_{AR} = \rho_A}} \Sm_{\Meas, f}(\rho_{AR} \| \sigma_{AR})
    &\geq \sup_{\substack{\Lambda_A \in \HERM(A) \\ \spec(\Lambda_A) \subset \dom(f^*)}} \left\{\tr[\rho_A \Lambda_A] - \tr[\sigma_{A} f^*(\Lambda_{A})] \right\}. \\
    &= \Sm_{\Meas,f}(\rho_{A} \| \sigma_{A}).
\end{align}
For the other direction, as $f^*$ is operator monotone we have for any feasible {$(\gamma_{AR}, \Lambda_A)$}, it holds that $f^*(\gamma_{AR}) \geq f^*(\Lambda_A)\ox I_R$. This then implies
\begin{align}
    \inf_{\substack{\rho_{AR} \in \PSD(A R) \\ \tr_{R} \rho_{AR} = \rho_A}} \Sm_{\Meas, f}(\rho_{AR} \| \sigma_{AR})
    &\leq \sup_{\substack{\gamma_{AR} \in \HERM(AR) \\ \spec(\gamma_{AR}) \subset \dom(f^*)}} \sup_{\substack{\Lambda_A \in \HERM(A) \\ \Lambda_A \otimes I_{R} \leq \gamma_{AR}}} \left\{\tr[\rho_A \Lambda_A] - \tr[\sigma_{A} f^*(\Lambda_A)] \right\} \\
    &\leq \sup_{\substack{\Lambda_A \in \HERM(A) \\ \spec(\Lambda_{A}) \subset \dom(f^*)}} \left\{\tr[\rho_A \Lambda_A] - \tr[\sigma_{A} f^*(\Lambda_{A})] \right\}. \\
    &= {\Sm_{\Meas,f}(\rho_A \| \sigma_A)}.
\end{align}
In the second inequality, we used the condition $\inf \dom(f^*) = -\infty$ to get that $\Lambda_{A} \otimes I_{R} \leq \gamma_{AR}$ and $\spec(\gamma_{AR}) \subset \dom(f^*)$ implies $\spec(\Lambda_{A}) \subset \dom(f^*)$.
This establishes~\eqref{eq: thm uhlmann statement 1}. 

To prove that the infimum is attained, {it is well-known that a lower semicontinuous function attains its infimum on a compact set~\cite[Theorem 7.3.1]{kurdila2005convex}. This can be more explicitly argued as follows.} For any $n \geq 1$, we can take $\rho_{AR}^{(n)} \in \PSD(AR)$ satisfying $\tr_{R} \rho_{AR}^{(n)} = \rho_{A}$ such that $0 \leq \Sm_{\Meas, f}(\rho^{(n)}_{AR} \| \sigma_{AR}) - \Sm_{\Meas,f}(\rho_{A} \| \sigma_A) \leq \frac{1}{n}$. As the set $\{\rho_{AR} \in \PSD(AR) : \tr_{R} \rho_{AR} = \rho_A\}$ is compact, we can find a converging subsequence $\lim_{i \to \infty} {\rho_{AR}^{(n_i)}} = \rho_{AR}^{\infty}$. By construction, $\tr_{R} \rho_{AR}^{\infty} = \rho_A$ so the data processing inequality (see Remark~\ref{rmk: DPI}) gives $\Sm_{\Meas, f}(\rho^{\infty}_{AR} \| \sigma_{AR}) \geq \Sm_{\Meas,f}(\rho_A \| \sigma_{A})$. As $\Sm_{\Meas,f}$ can be written as the supremum of lower semicontinuous functions, it is also lower semi-continuous. This implies that
\begin{align}
\Sm_{\Meas,f}(\rho_{A} \| \sigma_A) = \lim_{i \to \infty} \Sm_{\Meas,f}(\rho^{(n_i)}_{AR} \| \sigma_{AR}) \geq \Sm_{\Meas,f}(\rho^{\infty}_{AR} \| \sigma_{AR}).
\end{align}
So the infimum is attained at $\rho_{AR}^\infty$.

The proof of Eq.~\eqref{eq: thm uhlmann statement} is similar. Recall the variational expression in Eq.~\eqref{eq:SM variational formula change of variables} with $\psi = (f^*)^{-1}$:
 \begin{align}
    \Sm_{\Meas,f}(\rho \| \sigma) = \sup_{\substack{\gamma \in \HERM\\ \spec(\gamma) \subset \range(f^*)}} \left\{ \tr[\rho (f^*)^{-1}(\gamma)] - \tr[\sigma \gamma] \right\}.
     \label{eq: var expression Qalpha proof uhlmann}
 \end{align}
We now use Sion's minimax theorem~\cite[Corollary 3.3]{Sion1958} to get 
\begin{align}
    \inf_{\substack{\sigma_{AR} \in \PSD(A R) \\ \tr_{R} \sigma_{AR} = \sigma_A}} \Sm_{\Meas, f}(\rho_{AR} \| \sigma_{AR}) =  \sup_{\substack{\gamma_{AR} \in \HERM(AR)\\ \spec(\gamma_{AR}) \subset \range(f^*)}} \inf_{\substack{\sigma_{AR} \in \PSD(A R) \\ \tr_{R} \sigma_{AR} = \sigma_A}} \left\{ \tr[\rho_{AR} (f^*)^{-1}(\gamma_{AR})] - \tr[\sigma_{AR} \gamma_{AR}] \right\}.
\end{align}
In fact, the set $\{\sigma_{AR} \in \PSD(A R) : \tr_{R} \sigma_{AR} = \sigma_A\}$ is compact and convex and $\{\gamma_{AR} \in \HERM(AR): \spec(\gamma_{AR}) \subset \range(f^*)\}$ is convex. In addition the expression $\tr[\rho_{AR} (f^*)^{-1}(\gamma_{AR})] - \tr[\sigma_{AR} \gamma_{AR}]$ is concave in {$\gamma_{AR}$} and linear in {$\sigma_{AR}$}. Now, by semidefinite programming duality, we have
\begin{align}\label{eq: duality support function}
    \sup_{\substack{\sigma_{AR} \in \PSD(A R) \\ \tr_{R} \sigma_{AR} = \sigma_A}} \tr[\sigma_{AR} \gamma_{AR}] = \inf_{\substack{\Lambda_A \in \HERM(A) \\ \Lambda_A \otimes I_{R} \geq \gamma_{AR}}} \tr[\sigma_A \Lambda_A].
\end{align}
In fact it is clear that the supremum is finite as $0 \leq \sigma_{AR} \leq \tr[\sigma_{A}] I_{AR}$ and $\Lambda_{A} = (\| \gamma_{AR} \|_{\infty} + 1) I_{A}$ is strictly feasible for the dual. As a result,
\begin{align}
    \inf_{\substack{\sigma_{AR} \in \PSD(A R) \\ \tr_{R} \sigma_{AR} = \sigma_A}} \Sm_{\Meas, f}(\rho_{AR} \| \sigma_{AR}) =  \sup_{\substack{\gamma_{AR} \in \HERM(AR)\\ \spec(\gamma_{AR}) \subset \range(f^*)}} \sup_{\substack{\Lambda_A \in \HERM(A) \\ \Lambda_A \otimes I_{R} \geq \gamma_{AR}}} \left\{\tr[\rho_{AR} (f^*)^{-1}(\gamma_{AR})] - \tr(\sigma_A \Lambda_A)] \right\}.
\end{align}
Note that we can in particular choose $\gamma_{AR} = \Lambda_A \otimes I_{R}$ and get the easy direction:
\begin{align}
     \inf_{\substack{\sigma_{AR} \in \PSD(A R) \\ \tr_{R} \sigma_{AR} = \sigma_A}} \Sm_{\Meas, f}(\rho_{AR} \| \sigma_{AR})
    &\geq \sup_{\substack{\Lambda_A \in \HERM(A) \\ \spec(\Lambda_{A}) \subset \range(f^*)}} \left\{\tr[\rho_A (f^*)^{-1}(\Lambda_A)] - \tr[\sigma_{A} \Lambda_{A}] \right\} \\
    &= {\Sm_{\Meas,f}(\rho_A \| \sigma_A)}.
\end{align}
For the other direction, as $(f^*)^{-1}$ is operator monotone we have for any feasible $(\gamma, \Lambda)$, it holds that $(f^*)^{-1}(\gamma_{AR}) \leq (f^*)^{-1}(\Lambda_A)\ox I_R$. In addition, as $\sup \range(f^*) = +\infty$ and $\spec(\gamma_{AR}) \subset \range(f^*)$, we have that $\spec(\Lambda_A) \subset \range(f^*)$. In fact, $f^*$ is continuous on its domain since it is a univariate lower semicontinuous convex function, and thus $\range(f^*)$ is an interval, which by our additional assumption, is of the form $[a,+\infty)$ or $(a,+\infty)$. This then implies
\begin{align}
    \inf_{\substack{\sigma_{AR} \in \PSD(A R) \\ \tr_{R} \sigma_{AR} = \sigma_A}} \Sm_{\Meas, f}(\rho_{AR} \| \sigma_{AR})
    &\leq \sup_{\substack{\Lambda_A \in \HERM(A) \\ \spec(\Lambda_{A}) \subset \range(f^*)}} \left\{\tr[\rho_A (f^*)^{-1}(\Lambda_A)] - \tr[\sigma_{A} \Lambda_{A}] \right\} \\
    &= \Sm_{\Meas,f}(\rho_{A} \| \sigma_{A}).
\end{align}
This establishes~\eqref{eq: thm uhlmann statement}. The attainment of the infimum follows the same proof as before.
\end{proof}

As an immediate corollary, we obtain the following result of \Renyi divergences.
\begin{shaded}
\begin{corollary}[Uhlmann's theorem for measured \Renyi divergences]\label{thm: DM Uhlmann}
Let $\rho_{A} \in \density(A)$, $\sigma_{AR} \in \PSD(AR)$,  and $\alpha \in [0,\frac{1}{2}]$, then 
 \begin{align}
 \label{eq: uhlmann statement 1}
     \inf_{\substack{\rho_{AR} \in \PSD(AR) \\ \tr_{R} \rho_{AR} = \rho_A}} D_{\Meas, \alpha}(\rho_{AR} \| \sigma_{AR}) = D_{\Meas,\alpha}(\rho_{A} \| \sigma_A).
 \end{align} 
 Similarly, let $\rho_{AR} \in \density(AR), \sigma_{A} \in \PSD(A)$ and $\alpha \in [\frac{1}{2},+\infty]$~\footnotemark, then 
 \begin{align}
 \label{eq: uhlmann statement}
     \inf_{\substack{\sigma_{AR} \in \PSD(AR) \\ \tr_{R} \sigma_{AR} = \sigma_A}} D_{\Meas, \alpha}(\rho_{AR} \| \sigma_{AR}) = D_{\Meas,\alpha}(\rho_{A} \| \sigma_A).
 \end{align}
 In addition, the infimum in both equations is attained.
 \end{corollary}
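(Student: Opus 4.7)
The plan is to derive the corollary by specializing Theorem~\ref{thm: DM f Uhlmann} to the convex function $f_\alpha$ associated with the $\alpha$-R\'enyi divergence, which has already been identified in Remarks~\ref{rem: examples DM proj} and~\ref{rem: examples DM equality}. The remaining work is then to verify the theorem's hypotheses case by case (using Table~\ref{tbl:fenchel}) and to translate the resulting Uhlmann identity for $\Sm_{\Meas,f_\alpha}$ into one for $D_{\Meas,\alpha}$ via the defining monotone change of variable.

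For~\eqref{eq: uhlmann statement 1} with $\alpha\in[0,\tfrac12]$ I would apply the first part of Theorem~\ref{thm: DM f Uhlmann} with $f_\alpha(t)=-t^\alpha$ and $\dom(f_\alpha)=\RR_{>0}$. Table~\ref{tbl:fenchel} records that $f_\alpha^*$ is operator convex and operator monotone on $\dom(f_\alpha^*)=\RR_{<0}$, and trivially $\inf\dom(f_\alpha^*)=-\infty$, so both hypotheses are met. This yields the Uhlmann identity for $\Sm_{\Meas,f_\alpha}$, and it transfers to $D_{\Meas,\alpha}$ because $x\mapsto\tfrac{1}{\alpha-1}\log(-x)$ is monotonically increasing on the relevant range $\Sm_{\Meas,f_\alpha}\le 0$ (the signs of $\alpha-1$ and of $\log(-x)$ as a function of $x$ cancel), so infima commute. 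For~\eqref{eq: uhlmann statement} with finite $\alpha\in[\tfrac12,\infty)$ I would instead apply the second part of Theorem~\ref{thm: DM f Uhlmann}, splitting into three subcases: $f_\alpha(t)=-t^\alpha$ on $\RR_{>0}$ when $\alpha\in[\tfrac12,1)$, $f_1(t)=t\log t$ when $\alpha=1$, and $f_\alpha$ from~\eqref{eq:falpha>1} when $\alpha\in(1,\infty)$. In every subcase, Table~\ref{tbl:fenchel} certifies that $(f_\alpha^*)^{-1}$ is operator concave and operator monotone, and a direct inspection gives $\range(f_\alpha^*)=\RR_{>0}$ so $\sup\range(f_\alpha^*)=+\infty$; the second part of the theorem then applies, and a parallel monotone translation (trivial at $\alpha=1$ since $D_{\Meas,1}=\Sm_{\Meas,f_1}$) delivers~\eqref{eq: uhlmann statement} for finite $\alpha$. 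Attainment of the infima is inherited directly from Theorem~\ref{thm: DM f Uhlmann} because the monotone translations preserve minimizers.

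The only remaining case is $\alpha=\infty$, which corresponds to $D_{\max}$; this is already established by the Uhlmann theorem of~\cite{metger2024generalised}, which I would simply cite. If a self-contained argument were preferred, one could pass to the limit along the finite-$\alpha$ minimizers: compactness of $\{\sigma_{AR}:\tr_R\sigma_{AR}=\sigma_A\}$ extracts a convergent subsequence, and the monotone convergence $D_{\Meas,\alpha}\nearrow D_{\max}$ combined with lower semi-continuity of $D_{\max}$ in its first argument would identify a limiting feasible $\sigma^*_{AR}$ satisfying $D_{\max}(\rho_{AR}\|\sigma^*_{AR})\le D_{\max}(\rho_A\|\sigma_A)$, which together with data processing yields equality. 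The main place where one might slip is the bookkeeping of signs in the monotone translation from $\Sm_{\Meas,f_\alpha}$ to $D_{\Meas,\alpha}$; conceptually, once Theorem~\ref{thm: DM f Uhlmann} is in hand, the corollary is essentially a checklist against Table~\ref{tbl:fenchel}.
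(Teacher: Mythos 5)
Your proposal is correct and follows essentially the same route as the paper: specialize Theorem~\ref{thm: DM f Uhlmann} to $f_\alpha$, check the operator convexity/monotonicity and domain/range conditions case by case (exactly the subcases $\alpha\in[0,\tfrac12]$, $[\tfrac12,1)$, $\{1\}$, $(1,\infty)$ the paper uses), and transfer the identity to $D_{\Meas,\alpha}$ by monotonicity of the defining transformation. The only divergence is at $\alpha=+\infty$, where the paper argues via a minimax exchange of $\sup_{\alpha\ge 1}$ with the infimum over extensions (using lower semicontinuity and monotonicity in $\alpha$) rather than citing~\cite{metger2024generalised}; your alternative subsequence argument is essentially equivalent, though the lower semicontinuity you need is that of $D_{\Meas,\beta}$ in the \emph{second} argument for each fixed finite $\beta$, followed by taking $\sup_\beta$, not lower semicontinuity of $D_{\max}$ in the first argument.
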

 \end{shaded}
 \footnotetext{{When $\alpha = 1$, $D_{\Meas,\alpha}$ is interpreted as $D_\Meas$. When $\alpha = +\infty$, $D_{\Meas,\alpha}$ is interpreted as $D_{\max}$.}}

 \begin{remark}
     Note that for $\alpha \geq 1$, it is easy to see that Uhlmann's property in the first argument (i.e.,~\eqref{eq: uhlmann statement 1}) cannot hold in general. 
     In fact, we can choose $\rho_{A} = \ket{0} \bra{0}$ and {the Bell state} $\sigma_{AR} = \ket{\Phi} \bra{\Phi}$ where $\ket{\Phi} = \frac{1}{\sqrt{2}}(\ket{00} + \ket{11})$, then $D_{\Meas, \alpha}(\rho_{A} \| \sigma_{A}) = \log 2$ but $D_{\Meas,\alpha}(\rho_{AR} \| \sigma_{AR}) = +\infty$ for any extension $\rho_{AR}$ of $\rho_{A}$. 
 \end{remark}
\begin{proof}
    For $\alpha \in [0,\frac{1}{2}]$, as previously mentioned, we have $f_{\alpha}^*(z) = (1-\alpha) (\frac{-z}{\alpha})^{-\frac{\alpha}{1-\alpha}}$ and $\dom(f^*) = \RR_{<0}$. In this case, $f^*_{\alpha}$ is operator convex, operator monotone and the domain is unbounded from below.

    For $\alpha \in [\frac{1}{2},1)$, we also have $f_{\alpha}^*(z) = (1-\alpha) (\frac{-z}{\alpha})^{-\frac{\alpha}{1-\alpha}}$ and $\dom(f_{\alpha}^*) = \RR_{<0}$. But in this case, we check the conditions for~\eqref{eq: thm uhlmann statement}: we have $f_{\alpha}^*$ is one-to-one with $\range(f_{\alpha}^*) = \RR_{>0}$ and $(f_{\alpha}^*)^{-1}(\lambda) = -\alpha (\frac{\lambda}{1-\alpha})^{-\frac{1-\alpha}{\alpha}}$ which is operator concave, operator monotone and we have $\sup \range(f_{\alpha}^*) = +\infty$.

    For $\alpha > 1$, we have $f^*_{\alpha}(z) = (\alpha - 1) \left(\frac{z}{\alpha}\right)^{\frac{\alpha}{\alpha-1}}$ with $\dom(f^*) = \range(f^*) = \RR_{>0}$. Thus, $(f_{\alpha}^*)^{-1}(\lambda) =\alpha (\frac{\lambda}{\alpha-1})^{\frac{\alpha-1}{\alpha}}$ is operator concave, operator monotone and $\sup \range(f_{\alpha}^*) = +\infty$.

    For $\alpha=1$, recall that $f_1(t)=t\log t$, and thus we have $f_1^*(z) = e^{z-1}$ with $\dom(f_1^*) = \RR$ and $\range(f_1^*) = \RR_{>0}$. Thus $(f_1^*)^{-1}(\lambda) = 1+\log \lambda$ is operator concave, operator monotone and $\sup \range(f_1^*) = +\infty$.

    For $\alpha = +\infty$, we take supremum over $\alpha \in [1,+\infty)$. Note that $D_{\Meas,\alpha}(\rho\|\sigma)$ is lower semicontinous in $(\rho,\sigma)$ for every $\alpha \in [1,+\infty)${~\cite[Proposition 18]{mosonyi2023continuitypropertiesquantumrenyi}} and monotonic increasing in $\alpha$ for every $(\rho,\sigma)${~\cite[Proposition 22]{mosonyi2023continuitypropertiesquantumrenyi}}. So we can use the minimax theorem in~\cite[Corollary A.2]{mosonyi2011quantum} to exchange the supremum and infimum. Finally, noting that $\sup_{\alpha \geq 1} D_{\Meas,\alpha}(\rho\|\sigma) = D_{\max}(\rho\|\sigma)$, we have the desired result. {As $D_{\max}(\rho\|\sigma)$ is lower semicontinous in $(\rho,\sigma)$ \cite[Proposition 18]{mosonyi2023continuitypropertiesquantumrenyi}, the infimum is attained on a compact set~\cite[Theorem 7.3.1]{kurdila2005convex}.}
\end{proof}
 
The case $\alpha = \frac{1}{2}$ corresponds to the fidelity as $D_{\Meas,\frac{1}{2}} = -2 \log F$, where $F$ is the fidelity; see~\cite{fuchs1996distinguishability,mosonyi2015quantum}. Thus, Uhlmann's theorem is the special case $\alpha = \frac{1}{2}$ of this theorem. The case $\alpha = +\infty$ was shown in~\cite{metger2024generalised}. 
Moreover, the regularized Uhlmann's theorem from~\cite{metger2024generalised} follows easily from this theorem. Note that the result in~\cite{metger2024generalised} is shown for $\alpha > 1$, whereas the following result applies more generally.

Let $D_{\Sand,\alpha}(\rho\|\sigma) := \frac{1}{\alpha-1}\log\tr\left[\sigma^{\frac{1-\alpha}{2\alpha}}\rho\sigma^{\frac{1-\alpha}{2\alpha}}\right]^\alpha$
be the sandwiched \Renyi divergence~\cite{muller2013quantum,wilde2014strong}.

\begin{shaded}
\begin{corollary}\label{coro: regularized Uhlmann}
Let $\rho_{AR} \in \density(AR), \sigma_{A} \in \PSD(A)$ and $\alpha \in [\frac{1}{2}, +\infty]$, then 
\begin{align}
\label{eq: regularized uhlmann statement}
\lim_{n \to \infty} \frac{1}{n} \inf_{\substack{\sigma_{A^nR^n} \in \PSD(A^n R^n)\\ \tr_{R^n} \sigma_{A^nR^n} = \sigma_A^{\otimes n}}} D_{\Sand, \alpha}(\rho_{AR}^{\otimes n} \| \sigma_{A^nR^n}) = D_{\Sand,\alpha}(\rho_{A} \| \sigma_A).
\end{align} 
\end{corollary}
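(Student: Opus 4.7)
The plan is to establish both inequalities in the equality separately, using Corollary~\ref{thm: DM Uhlmann} (measured Uhlmann) together with the regularization identity $D_{\Sand,\alpha}(\tau\|\omega) = \lim_k \tfrac{1}{k} D_{\Meas,\alpha}(\tau^{\otimes k}\|\omega^{\otimes k})$ valid for $\alpha \in [\tfrac{1}{2},+\infty]$ (cf.~\cite{mosonyi2015quantum}), and the additivity of $D_{\Sand,\alpha}$ on tensor products. The easy direction $\geq$ is immediate: for any extension $\sigma_{A^nR^n}$ of $\sigma_A^{\otimes n}$, data-processing of $D_{\Sand,\alpha}$ under $\tr_{R^n}$ combined with additivity yields $D_{\Sand,\alpha}(\rho_{AR}^{\otimes n}\|\sigma_{A^nR^n}) \geq D_{\Sand,\alpha}(\rho_A^{\otimes n}\|\sigma_A^{\otimes n}) = n D_{\Sand,\alpha}(\rho_A\|\sigma_A)$; taking the infimum over extensions, dividing by $n$, and passing to the limit gives the bound.

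For the harder direction $\leq$, denote by $b_N$ the infimum appearing on the left-hand side. For each $n$, Corollary~\ref{thm: DM Uhlmann} produces an extension $\tau^{(n)}$ of $\sigma_A^{\otimes n}$ satisfying $D_{\Meas,\alpha}(\rho_{AR}^{\otimes n}\|\tau^{(n)}) = D_{\Meas,\alpha}(\rho_A^{\otimes n}\|\sigma_A^{\otimes n})$. Since $(\tau^{(n)})^{\otimes m}$ is a valid product extension of $\sigma_A^{\otimes nm}$, additivity of $D_{\Sand,\alpha}$ yields $b_{nm}/(nm) \leq \tfrac{1}{n}D_{\Sand,\alpha}(\rho_{AR}^{\otimes n}\|\tau^{(n)})$ for every $m$; because $b_N$ is itself subadditive in $N$ by the same product trick, Fekete's lemma gives $\lim_N b_N/N = \inf_N b_N/N \leq \tfrac{1}{n}D_{\Sand,\alpha}(\rho_{AR}^{\otimes n}\|\tau^{(n)})$ for every $n$. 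The natural next step is to rewrite the right-hand side via the regularization identity as $\lim_\ell \tfrac{1}{n\ell}D_{\Meas,\alpha}(\rho_{AR}^{\otimes n\ell}\|(\tau^{(n)})^{\otimes\ell})$ and to compare with the marginal quantity $\tfrac{1}{n\ell}D_{\Meas,\alpha}(\rho_A^{\otimes n\ell}\|\sigma_A^{\otimes n\ell})$, which tends to $D_{\Sand,\alpha}(\rho_A\|\sigma_A)$ as $\ell \to \infty$ by regularization applied to the marginals.

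The hard part will be to close the loop, namely to show that for a carefully chosen sequence $\tau^{(n)}$ (possibly along a subsequence) the quantity $\tfrac{1}{n}D_{\Sand,\alpha}(\rho_{AR}^{\otimes n}\|\tau^{(n)})$ actually converges to $D_{\Sand,\alpha}(\rho_A\|\sigma_A)$ rather than to a strictly larger value. A natural route is to exploit the permutation symmetry of $\rho_{AR}^{\otimes n}$ and $\sigma_A^{\otimes n}$ to select $\tau^{(n)}$ invariant under the symmetric group and apply a de Finetti-type reduction to approximate it by a mixture of product extensions, which would reduce the question to the non-regularized setting where the gap between the Sandwiched and measured divergences can be explicitly controlled.
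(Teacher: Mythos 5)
Your easy direction ($\geq$ via data processing and additivity of $D_{\Sand,\alpha}$) is fine, but the hard direction has a genuine gap that you yourself flag: you never show that $\tfrac{1}{n}D_{\Sand,\alpha}(\rho_{AR}^{\otimes n}\|\tau^{(n)})$ converges to $D_{\Sand,\alpha}(\rho_A\|\sigma_A)$, and the route you sketch does not close. The obstruction is directional. The measured Uhlmann identity controls $D_{\Meas,\alpha}(\rho_{AR}^{\otimes n}\|\tau^{(n)})$, and since $D_{\Meas,\alpha}\leq D_{\Sand,\alpha}$ this only gives a \emph{lower} bound on the sandwiched quantity you need to upper bound. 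Rewriting $D_{\Sand,\alpha}(\rho_{AR}^{\otimes n}\|\tau^{(n)})$ via the per-pair regularization identity expresses it as $\lim_\ell \tfrac{1}{\ell}D_{\Meas,\alpha}(\rho_{AR}^{\otimes n\ell}\|(\tau^{(n)})^{\otimes\ell})$, i.e., the measured divergence against the \emph{fixed product} extension $(\tau^{(n)})^{\otimes\ell}$. The measured Uhlmann identity at level $n\ell$ controls only the infimum over \emph{all} extensions of $\sigma_A^{\otimes n\ell}$, and the optimal one need not be of product form; nothing in your argument forces the value at $(\tau^{(n)})^{\otimes\ell}$ to be close to that infimum. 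So the loop is not closed, and the subadditivity/Fekete step does not rescue this.

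The paper closes exactly this gap with a single external tool: the asymptotic equivalence of the measured and sandwiched \Renyi divergences \emph{relative to a sequence of sets}, namely \cite[Lemma 29]{fang2024generalized}, applied to $\sB_n=\{\sigma_{A^nR^n}\in\PSD(A^nR^n):\tr_{R^n}\sigma_{A^nR^n}=\sigma_A^{\otimes n}\}$, which is convex, compact and permutation-invariant (and satisfies the required submultiplicativity of support functions, checked in Remark~\ref{rem: Bn hypothesis testing}). This yields $\lim_n\tfrac1n D_{\Sand,\alpha}(\rho_{AR}^{\otimes n}\|\sB_n)=\lim_n\tfrac1n D_{\Meas,\alpha}(\rho_{AR}^{\otimes n}\|\sB_n)$, and the right-hand side equals $\lim_n\tfrac1n D_{\Meas,\alpha}(\rho_A^{\otimes n}\|\sigma_A^{\otimes n})=D_{\Sand,\alpha}(\rho_A\|\sigma_A)$ by the measured Uhlmann identity at each $n$ plus the standard two-state regularization. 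This set-relative equivalence is itself a de Finetti-type statement of the kind you gesture at in your last paragraph, but it is the substantive content of the proof, not a routine finishing step; without it (or an equivalent argument) your proof is incomplete.
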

\end{shaded}
\begin{proof}
Applying Corollary~\ref{thm: DM Uhlmann} to the states $\rho_{AR}^{\otimes n}$ and $\sigma_{A}^{\otimes n}$, we get 
 \begin{align}\label{eq: sand uhlmann tmp1}
     \inf_{\substack{\sigma_{A^nR^n} \in \PSD(A^n R^n) \\ \tr_{R^n} \sigma_{A^nR^n} = \sigma_A^{\otimes n}}} D_{\Meas, \alpha}(\rho_{AR}^{\otimes n} \| \sigma_{A^nR^n}) = D_{\Meas,\alpha}(\rho_{A}^{\otimes n} \| \sigma_A^{\otimes n}).
 \end{align} 
By the asymptotic equivalence of the measured \Renyi divergence and the sandwiched \Renyi divergence~\cite[Lemma 28]{fang2024generalized}, applied with $\sB_n = \{\sigma_{A^nR^n} \in \PSD(A^n R^n): \tr_{R^n} \sigma_{A^nR^n} = \sigma_A^{\otimes n}\}$ (which is convex, compact, and permutation invariant), we obtain {
 \begin{align}\label{eq: sand uhlmann tmp2}
     \lim_{n \to \infty} \frac{1}{n} \inf_{\sigma_{A^n R^n} \in \sB_n}D_{\Sand, \alpha}(\rho_{AR}^{\otimes n} \| \sigma_{A^n R^n}) &= \lim_{n \to \infty} \frac{1}{n} \inf_{\sigma_{A^n R^n}\in \sB_n} D_{\Meas, \alpha}(\rho_{AR}^{\otimes n} \|\sigma_{A^n R^n}).
 \end{align}      
Combining Eq.~\eqref{eq: sand uhlmann tmp1}, Eq.~\eqref{eq: sand uhlmann tmp2} and the fact that $\lim_{n\to \infty}\frac{1}{n} D_{\Meas,\alpha}(\rho_A^{\ox n}\|\sigma_A^{\ox n}) = D_{\Sand,\alpha}(\rho_A\|\sigma_A)$ \cite[Corollary 3.8]{mosonyi2015quantum}, we have the asserted result.}
\end{proof}

\begin{remark}
\label{rem: Bn hypothesis testing}
Note that the set $\sB_n$ defined in the proof above satisfies all assumptions (A.1)-(A.4) in~\cite[Assumption 25]{fang2024generalized}. In fact, (A.1) and (A.2) are clear and $\tr_{R_1 R_2} (\sigma^{(1)}_{A_1 R_1} \otimes \sigma^{(2)}_{A_2 R_2}) = \sigma_{A_1}^{(1)} \otimes \sigma_{A_2}^{(2)}$ so (A.3) is satisfied. The condition (A.4) states that the support function is submultiplicative, i.e., $h_{\sB_{m+k}}(X_{A^mR^m} \otimes X_{A^kR^k}) \leq h_{\sB_{m}}(X_{A^mR^m}) h_{\sB_k}(X_{A^kR^k})$. For that, recall that we have obtained a dual formulation for the support function of the sets $\sB_n$ in~\eqref{eq: duality support function}. We use this same dual formulation to establish the desired submultiplicativity:
\begin{align*}
    h_{\sB_{m+k}}(X_{A^mR^m} \otimes X_{A^kR^k}) 
    &=      \sup_{\substack{\sigma_{A^{m+k} R^{m+k}} \in \PSD(A^{m+k} R^{m+k}) \\ \tr_{R^{m+k}} \sigma_{A^{m+k} R^{m+k}} = \sigma_A^{\otimes m+k}}} \tr(\sigma_{A^{m+k} R^{m+k}} X_{A^mR^m} \otimes X_{A^kR^k}) \\
    &= \inf_{\substack{\Lambda_{A^{m+k}} \in \HERM(A^{m+k}) \\ \Lambda_{A^{m+k}} \otimes I_{R^{m+k}} \geq X_{A^mR^m} \otimes X_{A^mR^m}}} \tr(\sigma_A^{\otimes m + k} \Lambda_{A^{m+k}}) \\
    &\leq \inf_{\substack{\Lambda_{A^{m}} \in \HERM(A^{m}),\Lambda_{A^{k}} \in \HERM(A^{k})  \\ \Lambda_{A^{m}} \otimes \Lambda_{A^k} \otimes I_{R^{m+k}} \geq X_{A^mR^m} \otimes X_{A^kR^k}}} \tr(\sigma_A^{\otimes m + k} \Lambda_{A^{m}} \otimes \Lambda_{A^k}) \\
    &\leq \inf_{\substack{\Lambda_{A^{m}} \in \HERM(A^{m}),\Lambda_{A^{k}} \in \HERM(A^{k})  \\ \Lambda_{A^{m}} \otimes I_{R^m} \geq X_{A^mR^m} \\
    \Lambda_{A^k} \otimes I_{R^{k}} \geq X_{A^kR^k} }} \tr(\sigma_A^{\otimes m} \Lambda_{A^{m}}) \tr(\sigma_A^{\otimes k} \Lambda_{A^k}) \\ 
    &= h_{\sB_m}(X_{A^mR^m}) h_{\sB_k}(X_{A^kR^k}).
\end{align*}
Thus, the results of~\cite{fang2024generalized} characterize the asymmetric hypothesis testing problem between $\{\rho_{AR}^{\otimes n}\}$ and $\sB_n$. 
 A related hypothesis testing scenario was considered in~\cite{hayashi2016correlation}. The authors of~\cite{hayashi2016correlation} studied hypothesis testing between $\{\rho_{AR}^{\ox n}\}$ and $\{\rho_A^{\ox n} \ox \sigma_{R^n}: \sigma \in \density(R^n)\}$  
 and demonstrated that the \Renyi mutual information  
 is the strong converse exponent in this context. 
 Similarly, we expect Corollary~\ref{coro: regularized Uhlmann} can be used to establish 
 the strong converse exponent for hypothesis testing between $\rho_{AR}^{\otimes n}$ and $\sB_n$. 
\end{remark}

\section{Duality with the quantum relative entropy}

\label{sec: duality}

This section is specific to the measured {relative entropy} (i.e., $f(t) = t \log t$): we utilize the variational expression for $D_{\Meas}$ to derive a duality relation with the Umegaki quantum relative entropy which is defined by $D(\rho \| \sigma) = \tr(\rho \log \rho) - \tr(\rho \log \sigma)$ when the support of $\rho$ is included in the support of $\sigma$ and $+\infty$ otherwise.
For a quantum divergence $\DD$ and a set $\cvxset$ of positive operators, we write $\DD(\rho\|\cvxset):= \inf_{\sigma \in \cvxset} \DD(\rho\|\sigma)$.

\begin{shaded}
\begin{proposition}\label{thm: duality sum zero}
Let $\rho \in \density$ and $\cvxset \subset \PSD$ be {an non-empty} compact convex set, {such that $\supp(\rho) \subset \supp(\omega)$ for some $\omega \in \cvxset$. Then it holds that}
\begin{align}
    D(\rho\|\cvxset) + D_{\Meas}(\rho\|\cvxset_{\pl \pl}^{\circ}) = D(\rho\|I),
\end{align}    
where $\cvxset^\circ:= \{X: \tr[XY] \leq 1,  \forall\, Y\in \cvxset\}$ is the polar set and $\cvxset^\circ_{\pl \pl}:= \cvxset^\circ \cap \PD$.
\end{proposition}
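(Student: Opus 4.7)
My strategy is to apply the variational expression~\eqref{eq:kl measured} for the measured Kullback--Leibler divergence, perform a minimax swap to convert the outer infimum over $\tau \in \cvxset^\circ_{\pl\pl}$ into an inner optimization, and then recognize the resulting expression via polar/gauge duality. Substituting $\omega = 1 + \log X$ with $X > 0$ in~\eqref{eq:kl measured} and using $\tr\rho = 1$ yields
\[
D_{\Meas}(\rho\|\tau) = 1 + \sup_{X > 0}\bigl(\tr[\rho\log X] - \tr[\tau X]\bigr).
\]
Taking the infimum over $\tau \in \cvxset^\circ_{\pl\pl}$ and applying Sion's minimax theorem (the objective is concave in $X$ by operator concavity of $\log$ and linear in $\tau$; the required compactness is obtained after truncating both variables to compact convex slices and passing to the limit) gives
\[
D_{\Meas}(\rho\|\cvxset^\circ_{\pl\pl}) = 1 + \sup_{X > 0}\Bigl(\tr[\rho\log X] - \sup_{\tau \in \cvxset^\circ_{\pl\pl}}\tr[\tau X]\Bigr).
\]

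For $X > 0$, the inner supremum $\sup_{\tau \in \cvxset^\circ_{\pl\pl}}\tr[\tau X]$ coincides with $\sup_{\tau\in\cvxset^\circ}\tr[\tau X]$, because any $\tau\in\cvxset^\circ$ is approximated by the positive definite element $(\tau + \epsilon I)/(1+\epsilon C)\in \cvxset^\circ_{\pl\pl}$ with $C = \sup_{\sigma\in\cvxset}\tr\sigma < \infty$ (finite by compactness of $\cvxset$); by bipolar duality this equals the Minkowski gauge $\mu_\cvxset(X) := \inf\{t > 0 : X \in t\cvxset\}$. Values of $X$ outside the cone generated by $\cvxset$ yield $\mu_\cvxset(X) = +\infty$ and hence contribute nothing to the outer supremum; the remaining $X$ can be parameterized as $X = t\sigma$ with $t > 0$ and $\sigma \in \cvxset$, $\sigma > 0$, so that $\log X = (\log t)I + \log\sigma$ and $\tr[\rho\log X] = \log t + \tr[\rho\log\sigma]$. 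The scalar identity $\sup_{t>0}(\log t - t) = -1$ then collapses the expression to
\[
D_{\Meas}(\rho\|\cvxset^\circ_{\pl\pl}) = \sup_{\sigma\in\cvxset,\ \sigma > 0}\tr[\rho\log\sigma].
\]

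The identity $D(\rho\|\sigma) = \tr[\rho\log\rho] - \tr[\rho\log\sigma]$ (valid when $\supp\rho \subseteq \supp\sigma$, with both sides $+\infty$ otherwise) together with $D(\rho\|I) = \tr[\rho\log\rho]$ rearranges the above into $D_{\Meas}(\rho\|\cvxset^\circ_{\pl\pl}) = D(\rho\|I) - \inf_{\sigma\in\cvxset}D(\rho\|\sigma)$, which is the claim. The restriction $\sigma > 0$ in the supremum is harmless by lower semicontinuity of $D(\rho\|\cdot)$ and density of $\cvxset \cap \PD$ in $\cvxset$ when the former is nonempty, with both sides reducing appropriately to $+\infty$ in the degenerate case that $\cvxset$ contains no element $\sigma$ with $\supp\rho\subseteq\supp\sigma$. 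The main technical obstacle will be the minimax swap itself, since $\cvxset^\circ_{\pl\pl}$ may be unbounded when $0\notin\cvxset$ and $\{X > 0\}$ is never compact; a careful truncation-and-limit argument restricting $X$ to $\{\epsilon I\leq X\leq RI\}$ and $\tau$ to a bounded convex slice of $\cvxset^\circ_{\pl\pl}$, combined with lower semicontinuity of the objective, will justify the interchange.
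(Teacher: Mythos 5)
Your route is genuinely different from the paper's, and the difference is exactly where it breaks. The paper Fenchel-bidualizes $\sigma \mapsto D(\rho\|\sigma)$ to express $D(\rho\|\sigma)$ as a supremum involving $D_{\Meas}(\rho\|\omega)$, and then applies Sion's theorem to $\inf_{\sigma\in\cvxset}\sup_{\omega>0}$, where the infimum runs over the \emph{compact} set $\cvxset$; the polar set only appears at the end via a normalization of the support function. You instead try to exchange $\inf_{\tau\in\cvxset^{\circ}_{\pl\pl}}$ with $\sup_{X>0}$, where $\cvxset^{\circ}_{\pl\pl}$ is neither closed nor, in general, bounded.

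That swap is a genuine gap, not a technicality. The truncation-and-limit scheme you sketch only ever proves the trivial direction $\inf\sup\geq\sup\inf$: truncating $X$ to a compact slice $K$ gives $\inf_{\tau}\sup_{K}=\sup_{K}\inf_{\tau}\leq\sup_{X>0}\inf_{\tau}$ by Sion, and passing to the limit over $K$ you must again push a supremum through $\inf_{\tau}$, which is another minimax of the same kind; truncating $\tau$ instead runs into the symmetric problem. Worse, the equality you need is simply false in general. Take $\cvxset=\{\proj{0}\}\subset\PSD(\CC^2)$ and $\rho=\proj{0}$. Then $\cvxset^{\circ}_{\pl\pl}=\{X>0:\bra{0}X\ket{0}\leq 1\}$ contains $\tfrac{1}{2}\proj{0}+N\proj{1}$ for every $N$, so $\inf_{\tau}(-\tr[\tau X])=-\infty$ for every $X>0$ and your $\sup\inf$ equals $-\infty$; yet $D_{\Meas}(\rho\|\cvxset^{\circ}_{\pl\pl})=0$ (the computational-basis measurement gives $\log(1/\tau_{00})\geq 0$ for every feasible $\tau$, and $\tau=(1-\delta)\proj{0}+\delta\proj{1}$ attains $0$ in the limit), consistent with the proposition since $D(\rho\|\cvxset)=D(\rho\|I)=0$. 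Your intermediate identity $D_{\Meas}(\rho\|\cvxset^{\circ}_{\pl\pl})=\sup_{\sigma\in\cvxset,\,\sigma>0}\tr[\rho\log\sigma]$ evaluates to $-\infty$ here and is therefore also false; relatedly, $\cvxset\cap\PD$ need not be dense in $\cvxset$ (it is empty in this example), so your final removal of the restriction $\sigma>0$ does not go through either. Your argument can be salvaged when $\cvxset$ contains a positive definite element, since then $\cvxset^{\circ}\cap\PSD$ is compact and Sion applies after passing to the closure; but to get the general statement you need the paper's arrangement, with the compact set $\cvxset$ on the infimum side of the minimax.
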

\end{shaded}
\begin{proof}
By the variational expression for the measured {relative entropy}~\cite{petzquantum,Berta2017} also presented in~\eqref{eq:kl measured}, we have
\begin{align}
D_{\Meas}(\rho\|\sigma) & = \sup_{\omega \in \PD} \ \tr[\rho \log \omega] + 1  - \tr[\sigma \omega]\\
& = \sup_{\omega \in \PD} \ -D(\rho\|\omega) - \tr[\sigma \omega] + 1 + D(\rho\|I)\\
& = g^*_{\rho}(-\sigma)  + 1 + D(\rho\|I),\label{eq: DM and Dstar}
\end{align}
where the second line follows from the definition of Umegaki relative entropy and the third line follows by denoting $g_\rho^*$ as the Fenchel conjugate of $g_{\rho} : \omega \mapsto D(\rho \| \omega)$,  with $\dom(g_{\rho}) = \{\omega \geq 0 : \rho \ll \omega\}$, i.e., for $\tau \in \HERM$, $g^*_{\rho}(\tau) = \sup_{\omega > 0} \tr[\omega \tau] - D(\rho \| \omega)$.\footnote{Note that $g_{\rho}^*(\tau) := \sup_{\omega \in \dom(g_{\rho})} \tr[\omega \tau] - D(\rho \| \omega) = \sup_{\omega > 0} \tr[\omega \tau] - D(\rho \| \omega)$ since for any $\omega \geq 0$, $D(\rho \| \omega) = \lim_{\epsilon\downarrow0} D(\rho \| \omega + \epsilon I)$} {Let us show that $\dom(g_{\rho}^*) = \{ \tau \leq 0 : \supp(\rho) \subset \supp(\tau)\}$. In fact, we can write 
\begin{align}
    g_{\rho}^*(\tau) 
    &= \sup_{\{\lambda_i, \Pi_i\}_i} \sum_{i} \lambda_i \tr[\Pi_i \tau] + \log \lambda_i \tr[\rho \Pi_i] - \tr[\rho \log \rho] \\
    &= -\tr[\rho \log \rho] + \sup_{\{\Pi_i\}_i} \sum_{i} \sup_{\{\lambda_i\}_i} \lambda_i \tr[\Pi_i \tau] + \log \lambda_i \tr[\rho \Pi_i] .
\end{align}
where the supremum is over orthogonal projectors $\Pi_i$ with $\sum_{i} \Pi_i = I$ and $\lambda_i > 0$. This quantity is finite if and only if for all $i$, we have $\tr[\Pi_i \tau] < 0$ or if $\tr[\Pi_i \tau] = 0$, then $\tr[\rho \Pi_i] = 0$. 
}
As $g_{\rho}$ 
is lower semicontinuous, the biduality relation of Fenchel conjugate~\cite[Theorem 12.2]{rockafellar} gives
\begin{align}
    D(\rho\|\sigma) = g_{\rho}(\sigma) & = \sup_{\omega \in \dom(g_{\rho}^*)} \tr [\sigma \omega] - g_{\rho}^*(\omega)\\
    & = \sup_{\omega < 0} \tr [\sigma \omega] - g_{\rho}^*(\omega)\\
    & = \sup_{\omega < 0} \tr [\sigma \omega] - D_{\Meas}(\rho\|-\omega) + D(\rho\|I) + 1\label{eq:DrhosigmaFenchel3} \\
    & = \sup_{\omega > 0} - \tr [\sigma \omega] - D_{\Meas}(\rho\|\omega) + D(\rho\|I) + 1.\label{eq:DrhosigmaFenchel4}
\end{align}
In the second line we used the fact that for any $\omega \in \dom(g_{\rho}^*)$ {and any $\epsilon > 0$, we have $(1-\epsilon) \omega + \epsilon (-I) < 0$ and}
\[
g_{\rho}^*(\omega) = \lim_{\epsilon\downarrow0} g_{\rho}^*((1-\epsilon) \omega + \epsilon (-I))
\]
which follows by applying \cite[Theorem 7.5]{rockafellar}. Note that the theorem applies because $-I$ is in the interior of $\dom(g_{\rho}^*)$ and $g_{\rho}^*$ is lower semicontinuous.
The third line \eqref{eq:DrhosigmaFenchel3} follows from Eq.~\eqref{eq: DM and Dstar}, and the last line follows by replacing $\omega$ with $-\omega$.
Optimizing over $\sigma \in \cvxset$, we have
\begin{align}
    D(\rho\|\cvxset)  = \inf_{\sigma \in \cvxset} D(\rho\|\sigma) = \inf_{\sigma \in \cvxset} \sup_{\omega \in \PD} - \tr[\sigma \omega] - D_{\Meas}(\rho\|\omega) + 1 + D(\rho\|I).
\end{align}
Since $D_{\Meas}(\rho\|\omega)$ is convex in $\omega$, the above objective function is concave in $\omega$ and {convex (actually linear)} in $\sigma$. Moreover, since $\cvxset$ is a compact convex set by assumption, we can use Sion's minimax theorem~\cite[Corollary 3.3]{Sion1958} to exchange the infimum and supremum. 
This gives
\begin{align}
   D(\rho\|\cvxset) = \sup_{\omega \in \PD} \inf_{\sigma \in \cvxset}  - \tr[\sigma \omega] - D_{\Meas}(\rho\|\omega) + 1 + D(\rho\|I).
\end{align}
Let $h_{\cvxset}(\omega) := \sup_{\sigma \in \cvxset} \tr [\sigma \omega]$ be the support function of $\cvxset$. We get
\begin{align}
   D(\rho\|\cvxset)  
    &  = \sup_{\omega \in \PD}  - h_{\cvxset}(\omega) - D_{\Meas}(\rho\|\omega) + 1 + D(\rho\|I)\\
  &  \stackrel{(a)}{=} \sup_{\substack{\omega \in \PD\\h_{\cvxset}(\omega)= 1}}  - D_{\Meas}(\rho\|\omega) + D(\rho\|I)\\
  &  \stackrel{(b)}{=} \sup_{\substack{\omega \in \PD\\h_{\cvxset}(\omega) \leq 1}}    - D_{\Meas}(\rho\|\omega) + D(\rho\|I)\\
  & \stackrel{(c)}{=} - D_{\Meas}(\rho \| \cvxset_{\pl \pl}^\circ) + D(\rho\|I),
  \end{align} 
To see why $(a)$ holds, let $\omega$ be {any feasible} solution in the first line and define $\widetilde \omega = \omega / h_{\cvxset}(\omega)$. Then $h_{\cvxset}(\widetilde \omega) = 1$ and we will see that $\widetilde \omega$ achieves an objective value no smaller than $\omega$ as
\begin{align}
- D_{\Meas}(\rho\|\widetilde \omega) = -D_{\Meas}(\rho\|\omega) - \log h_\cvxset(\omega) \geq - D_{\Meas}(\rho\|\omega) -  h_\cvxset(\omega) + 1
\end{align} 
where the inequality follows from the fact that $\log x \leq x-1$. The step $(b)$ follows by the same argument. Suppose $\omega$ is a solution in the third line, let $\widetilde \omega = \omega / h_{\cvxset}(\omega)$ and we can check that $\widetilde \omega$ gives an objective value no smaller than the second line. Finally, we have the step $(c)$ by the fact that $h_{\cvxset}(\omega) \leq 1$ if and only if $\omega \in \cvxset^{\circ}$ and the notation that $\cvxset_{\pl \pl}^\circ = \cvxset^\circ \cap \PD$. This completes the proof.
\end{proof}

\begin{remark} {
By the bipolar theorem $(\cvxset^{\circ})^{\circ} = \overline{\operatorname{conv}(\cvxset \cup \{0\})}$~\cite[Exercise 1.15]{aubrun2017alice} together with the dominance property
$D_{\Meas}(\rho\|\lambda \sigma) \ge D_{\Meas}(\rho\|\sigma)$ for $\lambda\in(0,1]$, one obtains a similar duality
$
D(\rho\|\cvxset^{\circ}_{\pl\pl}) + D_{\Meas}(\rho\|\cvxset) = D(\rho\|I).
$
Such duality relations allow one to transfer additivity properties between the two terms. For example, if the polar set $(\cvxset_n)^{\circ}$ is closed under tensor products—a technical assumption in many recent applications (e.g.,~\cite{fang2024generalized,fang2025efficient,fang2025adversarial,arqand2025marginal})—then $D(\rho^{\otimes n}\|(\cvxset_n)^{\circ}_{++})$ is clearly subadditive by definition, which in turn implies superadditivity of $D_{\Meas}(\rho^{\otimes n}\|\cvxset_n)$. This superadditivity underpins several recent results, as cited above. The duality relation thus provides a more intuitive explanation for why measured relative entropy exhibits superadditivity when the polar sets satisfy this property. The same reasoning applies when swapping the roles of $D$ and $D_{\Meas}$. These insights are in the same spirit as recent progress on the Frenkel representation of quantum relative entropy~\cite{frenkel2023integral}, which offers a direct route to proving the data-processing inequality—often challenging from the original definition, but straightforward from this new perspective. }
\end{remark}

Since the measured relative entropy coincides with the quantum relative entropy in the classical case, we can replace $D_{\Meas}$ with $D$, obtaining an equality expressed solely in terms of the relative entropy. In the quantum case, if we consider the regularized divergence $\DD^\infty(\rho\|\cvxset):= \lim_{n\to \infty} {\frac{1}{n}} \DD(\rho^{\ox n}\|\cvxset_n)$, we get a self-duality for the quantum relative entropy as follows.

\begin{shaded}
\begin{corollary}
Let $\rho \in \density$. Let $\{\sC_n\}_{n\in\NN}$ be a sequences of sets satisfying $\sC_n \subset \PSD(\cH^{\ox n})$ and $D_{\max}(\rho^{\ox n}\|\sC_n) \leq cn$, for all $n \in \NN$ and a constant $c \in \RR_{\pl}$. In addition, each $\cvxset_n$ is convex, compact and permutation-invariant, and $\cvxset_m \otimes \cvxset_k \subset \cvxset_{m+k}$ for any $m,k\in \NN$. Then it holds that
\begin{align}
 D^\reg(\rho\|\cvxset) + D^\reg(\rho\|\cvxset_{\pl \pl}^{\circ}) = D(\rho\|I).
\end{align} 
\end{corollary}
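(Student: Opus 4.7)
The plan is to apply Proposition~\ref{thm: duality sum zero} at each finite $n$ to $\rho^{\otimes n}$ and $\cvxset_n$, and then regularize. Since each $\cvxset_n$ is compact and convex, the proposition yields
\begin{align}
\tfrac{1}{n} D(\rho^{\otimes n}\|\cvxset_n) + \tfrac{1}{n} D_{\Meas}(\rho^{\otimes n}\|(\cvxset_n)^\circ_{\pl\pl}) = \tfrac{1}{n} D(\rho^{\otimes n}\|I^{\otimes n}) = D(\rho\|I)
\end{align}
for every $n \geq 1$. The strategy is to pass to the limit $n \to \infty$ on both sides and match the two terms on the left with the regularized divergences that appear in the corollary.

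Existence of $D^\reg(\rho\|\cvxset) = \lim_n \tfrac{1}{n} D(\rho^{\otimes n}\|\cvxset_n)$ follows from Fekete's lemma: additivity of $D$ under tensor products combined with $\cvxset_m \otimes \cvxset_k \subseteq \cvxset_{m+k}$ gives subadditivity of $n \mapsto D(\rho^{\otimes n}\|\cvxset_n)$, and the sequence is bounded by $cn$ thanks to the $D_{\max}$ hypothesis and $D \leq D_{\max}$. Consequently the other term $L := \lim_n \tfrac{1}{n} D_{\Meas}(\rho^{\otimes n}\|(\cvxset_n)^\circ_{\pl\pl})$ also exists and equals $D(\rho\|I) - D^\reg(\rho\|\cvxset)$. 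It remains to identify $L$ with $D^\reg(\rho\|\cvxset^\circ_{\pl\pl})$. The direction $L \leq D^\reg(\rho\|\cvxset^\circ_{\pl\pl})$ is immediate from $D_{\Meas} \leq D$. For the reverse inequality, we invoke the $\alpha \to 1$ case of \cite[Lemma~29]{fang2024generalized} applied to the sequence of polar sets $\{(\cvxset_n)^\circ_{\pl\pl}\}_n$, which supplies the asymptotic equivalence of the measured and Umegaki divergences.

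The main obstacle is thus verifying the hypotheses of that lemma for the polar sets, which we do in the spirit of Remark~\ref{rem: Bn hypothesis testing}. Convexity and permutation invariance are inherited directly from $\cvxset_n$. The $D_{\max}$ condition yields $\sigma_n \in \cvxset_n$ with $\rho^{\otimes n} \leq e^{cn} \sigma_n$, so the constraint $\tr[\omega \sigma_n] \leq 1$ on the polar translates into a uniform bound on $\tr[\omega \rho^{\otimes n}]$ and hence compactness of $(\cvxset_n)^\circ \cap \PSD$ after restricting to the support of $\rho^{\otimes n}$. The remaining ingredient is submultiplicativity of the support function, $h_{(\cvxset_{m+k})^\circ_{\pl\pl}}(X \otimes Y) \leq h_{(\cvxset_m)^\circ_{\pl\pl}}(X)\, h_{(\cvxset_k)^\circ_{\pl\pl}}(Y)$ for positive $X, Y$. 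We establish it by identifying the support function of the polar with the Minkowski gauge $\mu_{\cvxset_n}(X) = \inf\{t > 0 : X \in t \cvxset_n\}$ (which agrees with $h_{(\cvxset_n)^\circ_{\pl\pl}}$ on positive inputs by continuity) and observing that if $X = s X_0$ and $Y = t Y_0$ with $X_0 \in \cvxset_m$, $Y_0 \in \cvxset_k$, then $X \otimes Y = st\, (X_0 \otimes Y_0) \in st\, \cvxset_{m+k}$ by the tensor product hypothesis, giving $\mu_{\cvxset_{m+k}}(X \otimes Y) \leq \mu_{\cvxset_m}(X)\, \mu_{\cvxset_k}(Y)$. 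Once Lemma~29 applies, we obtain $L = D^\reg(\rho\|\cvxset^\circ_{\pl\pl})$, and substituting into the identity above yields the claimed duality.
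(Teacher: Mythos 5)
Your overall route is the same as the paper's: apply Proposition~\ref{thm: duality sum zero} to $\rho^{\otimes n}$ and $\cvxset_n$ for each $n$, deduce that both limits exist once one of them does, and convert the regularized measured term into the regularized Umegaki term via \cite[Lemma 29]{fang2024generalized}. The paper obtains the existence of $\lim_n \frac{1}{n}D(\rho^{\otimes n}\|\cvxset_n)$ from monotonicity in $n$ (\cite[Lemma 27]{fang2024generalized}) rather than Fekete's lemma, but your subadditivity argument is an acceptable substitute. Where you go beyond the paper is in actually verifying the hypotheses of Lemma~29 for the polar sets $(\cvxset_n)^{\circ}_{\pl\pl}$ (the paper simply cites the lemma), and this is where there is a genuine gap.

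The identification $h_{(\cvxset_n)^{\circ}_{\pl\pl}}(X)=\mu_{\cvxset_n}(X)$ for positive $X$ is false in general. What gauge duality gives is $h_{(\cvxset_n)^{\circ}}(X)=\mu_{\cvxset_n}(X)$, with the supremum over the \emph{full} polar; restricting to the positive part only yields $h_{(\cvxset_n)^{\circ}_{\pl\pl}}(X)\le\mu_{\cvxset_n}(X)$, since the supremum over $(\cvxset_n)^\circ$ may be approached along non-positive directions. Concretely, take $\cvxset_n=\{\sigma\}$ a single full-rank state: then $\mu_{\cvxset_n}(X)=+\infty$ unless $X\propto\sigma$, whereas $h_{(\cvxset_n)^{\circ}_{\pl\pl}}(X)=\sup\{\tr[XW]:W\ge 0,\ \tr[W\sigma]\le 1\}=\|\sigma^{-1/2}X\sigma^{-1/2}\|_\infty<\infty$. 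Because the inequality only goes one way, your chain establishes $h_{(\cvxset_{m+k})^{\circ}_{\pl\pl}}(X\otimes Y)\le\mu_{\cvxset_m}(X)\,\mu_{\cvxset_k}(Y)$, whose right-hand side dominates $h_{(\cvxset_m)^{\circ}_{\pl\pl}}(X)\,h_{(\cvxset_k)^{\circ}_{\pl\pl}}(Y)$, so the desired submultiplicativity does not follow. The fix is to use the correct identification: by SDP duality, for $X\ge 0$ one has $h_{(\cvxset_n)^{\circ}_{\pl\pl}}(X)=\inf\{t>0: X\le tZ,\ Z\in\cvxset_n\}=e^{D_{\max}(X\|\cvxset_n)}$, i.e., the gauge of the downward closure $(\cvxset_n-\PSD)\cap\PSD$ rather than of $\cvxset_n$ itself. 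With this, your tensoring argument goes through verbatim: $X\le sZ_1$ and $Y\le tZ_2$ with $Z_1\in\cvxset_m$, $Z_2\in\cvxset_k$ give $X\otimes Y\le st\,Z_1\otimes Z_2\in\cvxset_{m+k}$, yielding the submultiplicativity, and the remainder of your proof stands.
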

\end{shaded}
\begin{proof}
By Proposition~\ref{thm: duality sum zero}, we know that
\begin{align}
    \frac{1}{n} D(\rho^{\ox n}\|\cvxset_n) + \frac{1}{n} D_{\Meas}(\rho^{\ox n}\|(\cvxset_n)_{\pl \pl}^{\circ}) = D(\rho\|I).
\end{align}
{Since the first term on the left-hand side is subadditive (see~\cite[Lemma 26]{fang2024generalized}), its limit exists when $n$ goes to infinity by Fekete's lemma.} By the equality relation, the limit for the second term on the left-hand side also exists. Therefore, we have 
\begin{align}
    \left[ \lim_{n\to \infty} \frac{1}{n} D(\rho^{\ox n}\|\cvxset_n) \right] + \left[\lim_{n\to \infty} \frac{1}{n} D_{\Meas}(\rho^{\ox n}\|(\cvxset_n)_{\pl \pl}^{\circ})\right] = D(\rho\|I).
\end{align}
Finally, we complete the proof by noting that the second term on the left-hand side is equal to the one with quantum relative entropy (see~\cite[Lemma 28]{fang2024generalized}).
\end{proof}

\section{Discussions}
\label{sec: discussions}

{In this work, we have established Uhlmann's theorem for general measured $f$-divergences, which includes the measured \Renyi divergences as special cases and generalizes the well-known Uhlmann's theorem for the fidelity. The significance and potential applications are as follows:
\begin{itemize}
\item (\textbf{Fundamental importance:}) Uhlmann's theorem is a cornerstone in quantum information theory. It has applications in various subareas, ranging from quantum Shannon theory~\cite{watrous2018theory} and quantum computing~\cite{bostanci2025unitarycomplexityuhlmanntransformation} to quantum gravity~\cite{Kirk20}. A concrete example for the latter is the construction of the holographic dual of the bulk symplectic form in an entanglement wedge. The construction relies on evaluating the fidelity via a particular replica trick and uses Uhlmann's theorem as a key ingredient. This naturally prompts the question (see Footnote 4 of~\cite{Kirk20}) of whether the construction extends to \Renyi divergences and, if so, whether the resulting conclusions remain consistent. 
Moreover, having an Uhlmann's theorem for a quantum divergence is of fundamental interest, as it serves as a criterion for classifying divergences into different types. Note that Uhlmann's theorem cannot hold for any ``sufficient'' quantum divergence (in the sense that saturation of the data processing inequality $\mathbb{D}(\Phi(\rho)\|\Phi(\sigma)) = \mathbb{D}(\rho\|\sigma)$ implies the reversibility of $\Phi$ on $\{\rho, \sigma\}$). Since most commonly used quantum \Renyi divergences, including the Petz and sandwiched \Renyi divergences, are sufficient in this sense, they cannot satisfy Uhlmann's theorem (except for degenerate cases). This fundamentally distinguishes measured $f$-divergences from other quantum divergences and highlights their unique mathematical structure.
\item (\textbf{Existing application:}) In the case of $\alpha > 1$, the Uhlmann's theorem for regularized sandwiched \Renyi divergence proved in~\cite{metger2022generalised} already plays a crucial role in deriving the generalized entropy accumulation theorem, which is then used in quantum cryptographic tasks such as blind randomness expansion. Our work establishes Uhlmann's theorem directly for the measured \Renyi divergence in the single-letter case, which provides a conceptually clearer and technically simpler proof of the chain rule in~\cite{metger2022generalised}. Our Uhlmann theorem for measured divergences can potentially be used to obtain improved chain rules, e.g. improving the measured divergence term in~\cite[Corollary 4]{berta2022chain} by taking partial traces when the channel $\mathcal{F}$ satisfies a non-signalling property. We leave this for future work.
\item (\textbf{Potentially new application:}) The Uhlmann's theorem for measured divergence provides a potential approach to computing the amortized relative entropy for quantum channels. Let $\cN,\cM$ be two quantum channels; their amortized relative entropy is defined as  
\begin{align}
    D^A(\cN\|\cM) := \sup_{\rho_{RA},\sigma_{RA}} D(\cN_{A\to B}(\rho_{RA})\|\cM_{A\to B}(\sigma_{RA})) - D(\rho_{RA}\|\sigma_{RA}),
\end{align}
where the supremum is over all bipartite states $\rho_{RA},\sigma_{RA}$ with $R$ being a reference system of arbitrary dimension. This quantity serves as the ultimate limit for the asymptotic quantum channel discrimination problem, but its computability remains open~\cite{WW2019,fang2020chain}. The main challenge for this is the potentially unbounded dimension of the reference system $R$ in the optimization. However, our Uhlmann's theorem implies that the dimension in an analogous formula for measured relative entropy can be restricted to the same size as system $A$. Specifically, if we define the amortized measured relative entropy for channels by
\begin{align}
    D_{\Meas}^A(\cN\|\cM) := \sup_{\rho_{RA},\sigma_{RA}} D_{\Meas}(\cN_{A\to B}(\rho_{RA})\|\cM_{A\to B}(\sigma_{RA})) - D_{\Meas}(\rho_{RA}\|\sigma_{RA}),
\end{align} 
and consider a purification of $\rho_{RA}$ as $\ket{\phi}_{ERA}$, then by Uhlmann's theorem, there exists an extension $\sigma_{ERA}$ for $\sigma_{RA}$ such that $D_{\Meas}(\phi_{ERA}\|\sigma_{ERA}) = D_{\Meas}(\rho_{RA}\|\sigma_{RA})$. Moreover, by the data-processing inequality, we have
\begin{align}
    D_{\Meas}(\cN_{A\to B}(\rho_{RA})\|\cM_{A\to B}(\sigma_{RA})) \leq D_{\Meas}(\cN_{A\to B}(\phi_{ERA})\|\cM_{A\to B}(\sigma_{ERA})).
\end{align}
This implies that
\begin{align}
    D_{\Meas}^A(\cN\|\cM) = \sup_{\phi_{ERA},\sigma_{ERA}} D_{\Meas}(\cN_{A\to B}(\phi_{ERA})\|\cM_{A\to B}(\sigma_{ERA})) - D_{\Meas}(\phi_{ERA}\|\sigma_{ERA}).
\end{align}
That is, the optimization can be restricted to pure states $\phi_{ERA}$.
Rewriting the systems $ER$ as $R$, we get 
\begin{align}
    D_{\Meas}^A(\cN\|\cM) = \sup_{\phi_{RA},\sigma_{RA}} D_{\Meas}(\cN_{A\to B}(\phi_{RA})\|\cM_{A\to B}(\sigma_{RA})) - D_{\Meas}(\phi_{RA}\|\sigma_{RA}).
\end{align}
Suppose $|R|\geq |A|$ and let $A'$ be isomorphic to $A$. For any two purifications $\phi_{RA}$ and $\phi_{A'A}$ of $\phi_A$, there exists an isometry $V_{R\to A'}$ such that $\ket{\phi}_{A'A} = V_{R\to A'} \ket{\phi}_{RA}$. By isometry invariance of the measured relative entropy, we get 
\begin{align}
    D_{\Meas}^A(\cN\|\cM) = \sup_{\phi_{A'A},\sigma_{A'A}} D_{\Meas}(\cN_{A\to B}(\phi_{A'A})\|\cM_{A\to B}(\sigma_{A'A})) - D_{\Meas}(\phi_{A'A}\|\sigma_{A'A}),
\end{align} 
with $|A'|=|A|$. Let $D(\cN\|\cM):=\sup_{\rho_{RA}} D(\cN_{A\to B}(\rho_{RA})\|\cM_{A\to B}(\rho_{RA}))$ and its regularization $D^\infty(\cN\|\cM):= \lim_{n\to \infty} \frac{1}{n} D(\cN^{\ox n}\|\cM^{\ox n})$. We can have similar definitions for $D_\Meas$. Then we have the relations that
\begin{align}
\frac{1}{n}D_{\Meas}(\cN^{\ox n}\|\cM^{\ox n}) \leq D^\reg(\cN\|\cM) = D_{\Meas}^\reg(\cN\|\cM) \leq \frac{1}{n}D_{\Meas}^A(\cN^{\ox n}\|\cM^{\ox n}),
\end{align}
where the first inequality follows as $D_\Meas(\cdot\|\cdot) \leq D(\cdot\|\cdot)$, the equality follows by the asymptotic equivalence of measured relative entropy and quantum relative entropy for permutation-invariant states (e.g.~\cite[Lemma 16, 17]{fang2024generalized}), and the last inequality follows from the fact that $D_\Meas(\cN\|\cM) \leq D^A_{\Meas}(\cN\|\cM)$ and the subadditivity of $D^A_{\Meas}(\cN\|\cM)$ by definition.
Note that $D^\reg(\cN\|\cM) = D^A(\cN\|\cM)$ as proved in~\cite{fang2020chain}. Therefore, we have
\begin{align}
\frac{1}{n} D_{\Meas}(\cN^{\otimes n}\|\cM^{\otimes n}) \leq D^A(\cN\|\cM) \leq \frac{1}{n} D_{\Meas}^A(\cN^{\otimes n}\|\cM^{\otimes n}).
\end{align}
It is clear that the lower bound $\frac{1}{n} D_{\Meas}(\cN^{\otimes n}\|\cM^{\otimes n})$ converges to $D^\reg(\cN\|\cM)$ and therefore $D^A(\cN\|\cM)$. If we can show that the upper bound converges to $D^A(\cN\|\cM)$, then we would have established a sandwiching of $D^A(\cN\|\cM)$ between two sequences converging to it, and both sequences can be computed in finite time. This would imply the computability of $D^A(\cN\|\cM)$; that is, there exists an algorithm that terminates in finite time within additive error. 
\end{itemize}
}

\vspace{1cm}

\noindent \textbf{Acknowledgements.} 
We would like to thank David Sutter for a comment that helped us clarify Corollary~\ref{coro: regularized Uhlmann} and Remark~\ref{rem: Bn hypothesis testing}. {We also thank anonymous reviewers for their valuable comments and for bringing Hiai's book to our attention.}
K.F. is supported by the National Natural Science Foundation of China (Grant No. 92470113 and 12404569), the Shenzhen Science and Technology Program (Grant No. JCYJ20240813113519025), the Shenzhen Fundamental Research Program (Grant No. JCYJ20241202124023031), the General R\&D Projects of 1+1+1 CUHK-CUHK(SZ)-GDST Joint Collaboration Fund (Grant No. GRDP2025-022), and the University Development Fund (Grant No. UDF01003565). H.F. was partially funded by UK Research and Innovation (UKRI) under the UK government’s Horizon Europe funding guarantee EP/X032051/1.
O.F. acknowledges support by the European Research Council (ERC Grant AlgoQIP, Agreement No. 851716), by the European Union’s Horizon 2020 research and innovation programme under Grant Agreement No 101017733 (VERIqTAS) and by the Agence Nationale de la Recherche under the Plan France 2030 with the reference ANR-22-PETQ-0009.

\bibliographystyle{alpha_abbrv}
\bibliography{Bib}

\end{document}